\documentclass[11pt]{article}
\usepackage{amsfonts}
\usepackage{epsfig}
\usepackage{graphicx}
\usepackage{float}
\usepackage{latexsym}
\usepackage{amssymb}
\usepackage{amsmath}
\usepackage{amsthm}

\newtheorem{fait}{Fact}[section]
\newtheorem{theoreme}[fait]{Theorem}
\newtheorem{lemme}[fait]{Lemma}

\newtheorem{proposition}[fait]{Proposition}
\newtheorem{Cor}[fait]{Corollary}

\newcommand{\Ep}{\mathbb{E}}
\newcommand{\Vp}{\mathbb{V}ar}
\newcommand{\Pp}{\mathbb{P}}



\newcommand{\R}{\mathbb{R}}
\newcommand{\Z}{\mathbb{Z}}
\newcommand{\N}{\mathbb{N}}

\newcommand{\dg}{\Delta g}

\renewcommand{\tt}{S}

\newcommand{\un}{1\!\!1}


\title{DNA unzipping via stopped birth and death processes with unknown transition probabilities
\author{P. Andreoletti, R. Diel \footnote{Laboratoire MAPMO - C.N.R.S. UMR 6628 - F\'ed\'eration Denis-Poisson, Universit\'e d'Orl\'eans, 
(Orl\'eans France). \newline \vspace{0.1cm}  $\quad$  MSC 2000 62P10 ; 82D30. \newline \vspace{0.5cm} \textit{Key words :  DNA unzipping, birth and death processes, random environment, maximum of likelihood} }
}}

\begin{document}
\maketitle
\begin{abstract}
 In this paper we provide an alternative approach to the works of the physicists S. Cocco and R. Monasson about a model of DNA molecules. The aim is to predict the sequence of bases by mechanical stimulations. The model described by the physicists is a stopped birth and death process with unknown transition probabilities. We consider two models, a discrete in time and a continuous in time, as general as possible. We show that explicit formula can be obtained for the probability to be wrong for a given estimator, and apply it to evaluate the quality of the prediction. Also we add some generalizations comparing to the initial model allowing us to answer some questions asked by the physicists.
 \end{abstract}

\section{Introduction}

\subsection{The physical approach}

In this introduction we first summarize some ideas and results of the works of V. Baldazzi, S. Cocco, E. Marinari and R. Monasson  (\cite{Monasson0}, \cite{Monasson1}), and S. Cocco and R. Monasson \cite{Monasson2} who are interested in a method for DNA molecules sequencing. They study a mechanical way, described below, instead of  traditional bio-chemical or gel electrophoresis technics. The experiments for mechanical unzipping were first realized by Bockelmann, Helsot and coworkers  \cite{Bockelmann} and \cite{Bockelmann2}. The principle is based on the fact that the link strength between two bases of a given pair depends on whether it is a $C \equiv G$ or a $A-T$ (see Figure \ref{fig1}).
\begin{figure}[ht]
\begin{center}
\begin{picture}(0,0)%
\includegraphics{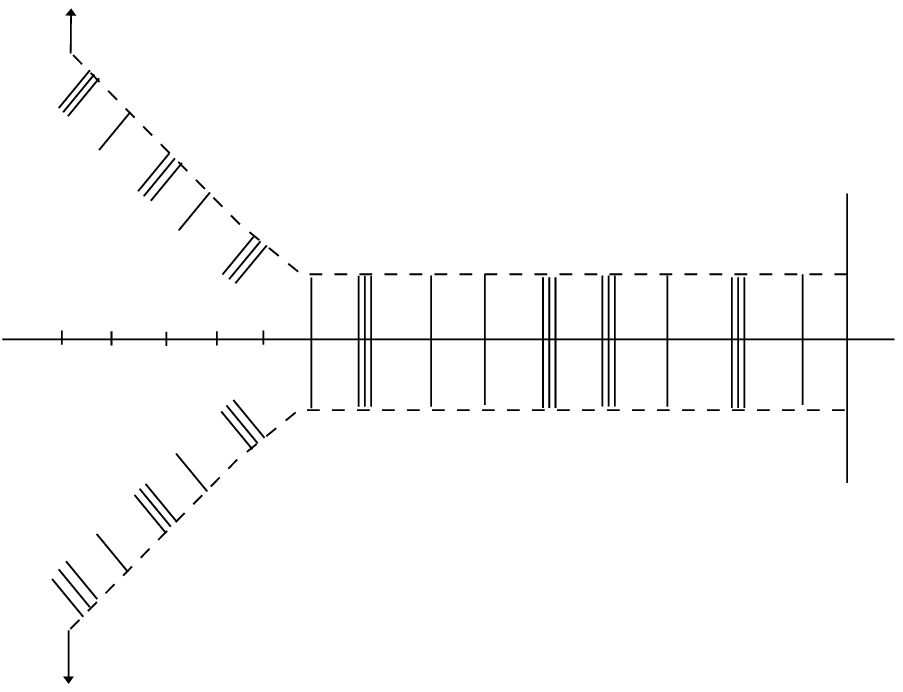}%
\end{picture}%
\setlength{\unitlength}{3947sp}%
\begingroup\makeatletter\ifx\SetFigFont\undefined%
\gdef\SetFigFont#1#2#3#4#5{%
  \reset@font\fontsize{#1}{#2pt}%
  \fontfamily{#3}\fontseries{#4}\fontshape{#5}%
  \selectfont}%
\fi\endgroup%
\begin{picture}(4306,3330)(3657,-4636)
\put(7447,-3384){\makebox(0,0)[lb]{\smash{{\SetFigFont{6}{7.2}{\rmdefault}{\mddefault}{\updefault}$T$}}}}
\put(7139,-3390){\makebox(0,0)[lb]{\smash{{\SetFigFont{6}{7.2}{\rmdefault}{\mddefault}{\updefault}$C$}}}}
\put(4440,-3057){\makebox(0,0)[lb]{\smash{{\SetFigFont{6}{7.2}{\rmdefault}{\mddefault}{\updefault}3}}}}
\put(3928,-3054){\makebox(0,0)[lb]{\smash{{\SetFigFont{6}{7.2}{\rmdefault}{\mddefault}{\updefault}1}}}}
\put(4170,-3058){\makebox(0,0)[lb]{\smash{{\SetFigFont{6}{7.2}{\rmdefault}{\mddefault}{\updefault}2}}}}
\put(4667,-3054){\makebox(0,0)[lb]{\smash{{\SetFigFont{6}{7.2}{\rmdefault}{\mddefault}{\updefault}4}}}}
\put(4902,-3051){\makebox(0,0)[lb]{\smash{{\SetFigFont{6}{7.2}{\rmdefault}{\mddefault}{\updefault}5}}}}
\put(5349,-2599){\makebox(0,0)[lb]{\smash{{\SetFigFont{6}{7.2}{\rmdefault}{\mddefault}{\updefault}$C$}}}}
\put(5060,-3380){\makebox(0,0)[lb]{\smash{{\SetFigFont{6}{7.2}{\rmdefault}{\mddefault}{\updefault}$A$}}}}
\put(5101,-2601){\makebox(0,0)[lb]{\smash{{\SetFigFont{6}{7.2}{\rmdefault}{\mddefault}{\updefault}$T$}}}}
\put(5342,-3383){\makebox(0,0)[lb]{\smash{{\SetFigFont{6}{7.2}{\rmdefault}{\mddefault}{\updefault}$G$}}}}
\put(7734,-3025){\makebox(0,0)[lb]{\smash{{\SetFigFont{6}{7.2}{\rmdefault}{\mddefault}{\updefault}$M$}}}}
\put(4884,-3537){\makebox(0,0)[lb]{\smash{{\SetFigFont{6}{7.2}{\rmdefault}{\mddefault}{\updefault}$G$}}}}
\put(4456,-3941){\makebox(0,0)[lb]{\smash{{\SetFigFont{6}{7.2}{\rmdefault}{\mddefault}{\updefault}$G$}}}}
\put(4691,-3731){\makebox(0,0)[lb]{\smash{{\SetFigFont{6}{7.2}{\rmdefault}{\mddefault}{\updefault}$T$}}}}
\put(4276,-4127){\makebox(0,0)[lb]{\smash{{\SetFigFont{6}{7.2}{\rmdefault}{\mddefault}{\updefault}$T$}}}}
\put(4074,-1656){\makebox(0,0)[lb]{\smash{{\SetFigFont{6}{7.2}{\rmdefault}{\mddefault}{\updefault}$G$}}}}
\put(4688,-2250){\makebox(0,0)[lb]{\smash{{\SetFigFont{6}{7.2}{\rmdefault}{\mddefault}{\updefault}$A$}}}}
\put(4288,-1842){\makebox(0,0)[lb]{\smash{{\SetFigFont{6}{7.2}{\rmdefault}{\mddefault}{\updefault}$A$}}}}
\put(4509,-2067){\makebox(0,0)[lb]{\smash{{\SetFigFont{6}{7.2}{\rmdefault}{\mddefault}{\updefault}$C$}}}}
\put(4904,-2462){\makebox(0,0)[lb]{\smash{{\SetFigFont{6}{7.2}{\rmdefault}{\mddefault}{\updefault}$C$}}}}
\put(4083,-4323){\makebox(0,0)[lb]{\smash{{\SetFigFont{6}{7.2}{\rmdefault}{\mddefault}{\updefault}$C$}}}}
\put(4032,-1393){\makebox(0,0)[lb]{\smash{{\SetFigFont{6}{7.2}{\rmdefault}{\mddefault}{\updefault}$f$}}}}
\put(4032,-4594){\makebox(0,0)[lb]{\smash{{\SetFigFont{6}{7.2}{\rmdefault}{\mddefault}{\updefault}$f$}}}}
\put(5655,-2595){\makebox(0,0)[lb]{\smash{{\SetFigFont{6}{7.2}{\rmdefault}{\mddefault}{\updefault}$T$}}}}
\put(5950,-2606){\makebox(0,0)[lb]{\smash{{\SetFigFont{6}{7.2}{\rmdefault}{\mddefault}{\updefault}$A$}}}}
\put(6245,-2605){\makebox(0,0)[lb]{\smash{{\SetFigFont{6}{7.2}{\rmdefault}{\mddefault}{\updefault}$C$}}}}
\put(6521,-2614){\makebox(0,0)[lb]{\smash{{\SetFigFont{6}{7.2}{\rmdefault}{\mddefault}{\updefault}$C$}}}}
\put(6796,-2601){\makebox(0,0)[lb]{\smash{{\SetFigFont{6}{7.2}{\rmdefault}{\mddefault}{\updefault}$T$}}}}
\put(7165,-2595){\makebox(0,0)[lb]{\smash{{\SetFigFont{6}{7.2}{\rmdefault}{\mddefault}{\updefault}$G$}}}}
\put(7459,-2593){\makebox(0,0)[lb]{\smash{{\SetFigFont{6}{7.2}{\rmdefault}{\mddefault}{\updefault}$A$}}}}
\put(5656,-3375){\makebox(0,0)[lb]{\smash{{\SetFigFont{6}{7.2}{\rmdefault}{\mddefault}{\updefault}$A$}}}}
\put(5947,-3376){\makebox(0,0)[lb]{\smash{{\SetFigFont{6}{7.2}{\rmdefault}{\mddefault}{\updefault}$T$}}}}
\put(6233,-3386){\makebox(0,0)[lb]{\smash{{\SetFigFont{6}{7.2}{\rmdefault}{\mddefault}{\updefault}$G$}}}}
\put(6509,-3381){\makebox(0,0)[lb]{\smash{{\SetFigFont{6}{7.2}{\rmdefault}{\mddefault}{\updefault}$G$}}}}
\put(6802,-3393){\makebox(0,0)[lb]{\smash{{\SetFigFont{6}{7.2}{\rmdefault}{\mddefault}{\updefault}$A$}}}}
\end{picture}%
\caption{ $X_.=5$, $b_1:C\equiv G$, $b_2:A-T$} \label{fig1}
\end{center}
\end{figure}
 Indeed the link $A-T$ is weaker for biochemical reasons than the link  $C\equiv G$. Moreover, there are also some stacking effects between adjacent bases, that is to say, the force needed to break, for example, the link $C\equiv G$ is different if the $C$ is following by a $A$, or a $T$. This last factor is not negligible (see the table below) and therefore must be taken into account if we want the model to be as sharp as possible.
 \begin{figure}
 \begin{center}
\begin{tabular}{|l|l|l|l|l|}
\hline   $g_0$ & A & T & C & G   \\
\hline   A  &  1.78 & 1.55  & 2.52 & 2.22  \\
\hline
 T  &  1.06 & 1.78  & 2.28 & 2.54  \\
\hline
 C  &  2.54 & 2.22  & 3.14 & 3.85  \\
\hline
 G  &  2.28 & 2.52  & 3.90 & 3.14  \\
\hline
\end{tabular} 
\end{center}
\caption{Binding free energies (units of $k_BT$)} \label{tab1}
\end{figure}

We now give a brief description of the experiment (for more details see \cite{Monasson0}), the extremities of the DNA molecule are stretched apart under a force $f$. The force $f$ is chosen large enough in such way that the molecule can be totally unzipped. However $f$ is also not too strong so that naturally the molecule rebuilds itself. Though there is back and forth movement of the number of open pair bases, this back and force movement generates a signal which can be measured by biologists. This signal can be modeled by a birth and death process with unknown   transition probabilities.
 
\subsection{The model} 
We denote by $M$ the length of the DNA chain and by $(b_1,b_2,$ $ \cdots,b_M)$ the sequence of bases of one of the strand of the molecule. So $b_i$ is the $i^{th}$ base which can be either a $A$, a $T$, a $C$ or a $G$ and the corresponding base of the other strand can be deduced.  We consider both a discrete and a continuous time-sequence of the number of open base pairs, the first one is denoted $X$, the second one $Y$. 
We now make the link between $X$ (and $Y$) and $b$. For this, we define the free energy $g$ of the molecule when the first $x$
base pairs are open: $$g(x):=\sum_{i=1}^xg_0(b_i,b_{i+1})-x g_{1}(f).$$
There are two different parts: first, $g_0(b_i,b_{i+1})$ is the binding energy of the pair $i$.  Note that stacking effects are taken into account: $g_0$ depends on the base content $b_i$ and on the next pair $b_{i+1}$. The second contribution $g_{1}(f)$ is the work to stretch under a force $f$ the open part of the two strands when one more base pair is opened, in particular $g_{1}$ increases when $f$ does. Note that $g_1$ is known, whereas $\sum_{i=1}^xg_0(b_i,b_{i+1})$ is unknown as we are looking for the $b_i$'s, in fact we assume that the sequence of bases is random.  A typical trajectory of $g$, obtained by numerical simulations, is given in \cite{Monasson1} page 7 and looks like Figure \ref{fig3p}.
 \begin{figure}[ht]
\begin{center}
\includegraphics[width=10cm,height=5cm]{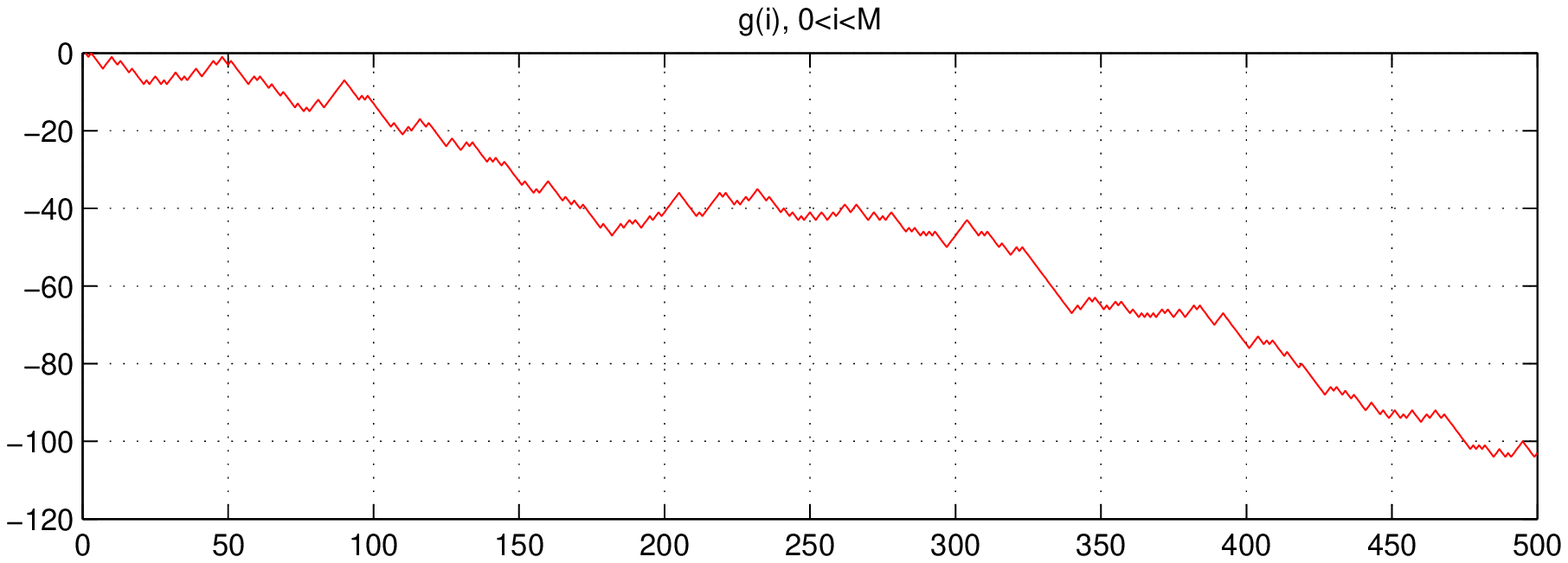}
\caption{A typical trajectory of $g$, $M=500$} \label{fig3p}
\end{center}
\end{figure}

 The number of open pairs fluctuates randomly with a distribution directly connected to the difference of the free energy $g$ between two consecutive base pairs. Therefore it can be represented by a random walk in random environment: 

\noindent  \\
\emph{ The discrete case}  is defined as follows,  assume that the random sequence $b:=(b_x,\ 1\leq x\leq M)$  is fixed, then the transition probabilities of the number of open pairs are given by: for all $1 \leq x \leq M-1$,
\begin{align} \label{loiX}
p_x=\Pp(X_{.+1}=x+1|X_.=x,b) &:= \frac{1}{1+\exp\left(\beta(g(x)-g(x-1))\right)},
\end{align}
where $\beta$ is a constant parameter which is proportional to the inverse of the temperature. Also we assume that the first base of the molecule is always open which means that $p_1=1$. Note that the larger is $f$ the greater is the probability to open a new pair. We easily get a simple expression for this probability which is
\begin{align} 
\Pp(X_{.+1}=x+1|X_.=x,\ b)&=  \frac{1}{1+\exp(\beta \dg(b_{x},b_{x+1}))}, \label{defpi}
\end{align}
where we denote 
\begin{align}
\dg(b_{x},b_{x+1}):=g_0(b_{x},b_{x+1})-g_1(f).\label{Delta}
\end{align}
 Formula (\ref{defpi}) shows that we only need local information on the sequence $b$ to get the  transition probability at site $x+1$, and that $X_.$ can only move forward with probability $p_x$ or backward with probability $1-p_x$. We discuss about some results on this well known model in the next section. A typical trajectory of $X$, obtained by  numerical simulations,  looks like Figure 4. \\

\noindent
For \emph{the continuous time model}, the physicists also take into account the time it takes $X$ to go from a site to another. Thus we introduce a second time continuous model $Y$. Given the $g_0$, when $Y$ is at the site $x$, it jumps in $x+1$ with rate $re^{-\beta g_0(b_{x},b_{x+1})}$ and in $x-1$ with rate $re^{-\beta g_1(f)}$ where $r$ is a constant which value depends on biological parameters. That is, given the DNA sequence $b$, $Y$ is a Markov process with finite state space $\{1,\dots,M\}$ killed when it hits $M$ whose transition rates are for $x\geq 2$,
$$
p(x,y)=\left\{\begin{array}{ll}
re^{-\beta g_0(b_{x},b_{y})}&\text{if }y=x+1,\\
re^{-\beta g_1(f)}&\text{if }y=x-1,\\
-r\big(e^{-\beta g_0(b_{x},b_{x+1})}+e^{-\beta g_1(f)}\big)&\text{if }y=x,\\
0 &\text{otherwise,}
\end{array}\right.
$$
and for $x=1$,
$$
p(1,y)=\left\{\begin{array}{ll}
re^{-\beta g_0(b_1,b_2)}&\text{if }y=2,\\
-re^{-\beta g_0(b_1,b_2)}&\text{if }y=1,\\
0 &\text{otherwise.}
\end{array}\right.
$$
 The process $Y$ can be represented as the couple $(X,T)$ where $X$ is the sequence of  discrete jumps and has the same law as in \eqref{loiX} and $T$ is the sequence of successive times spent in each site between two jumps.

\noindent \\


\begin{figure}[ht]
\begin{center}
\includegraphics[width=10cm,height=5cm]{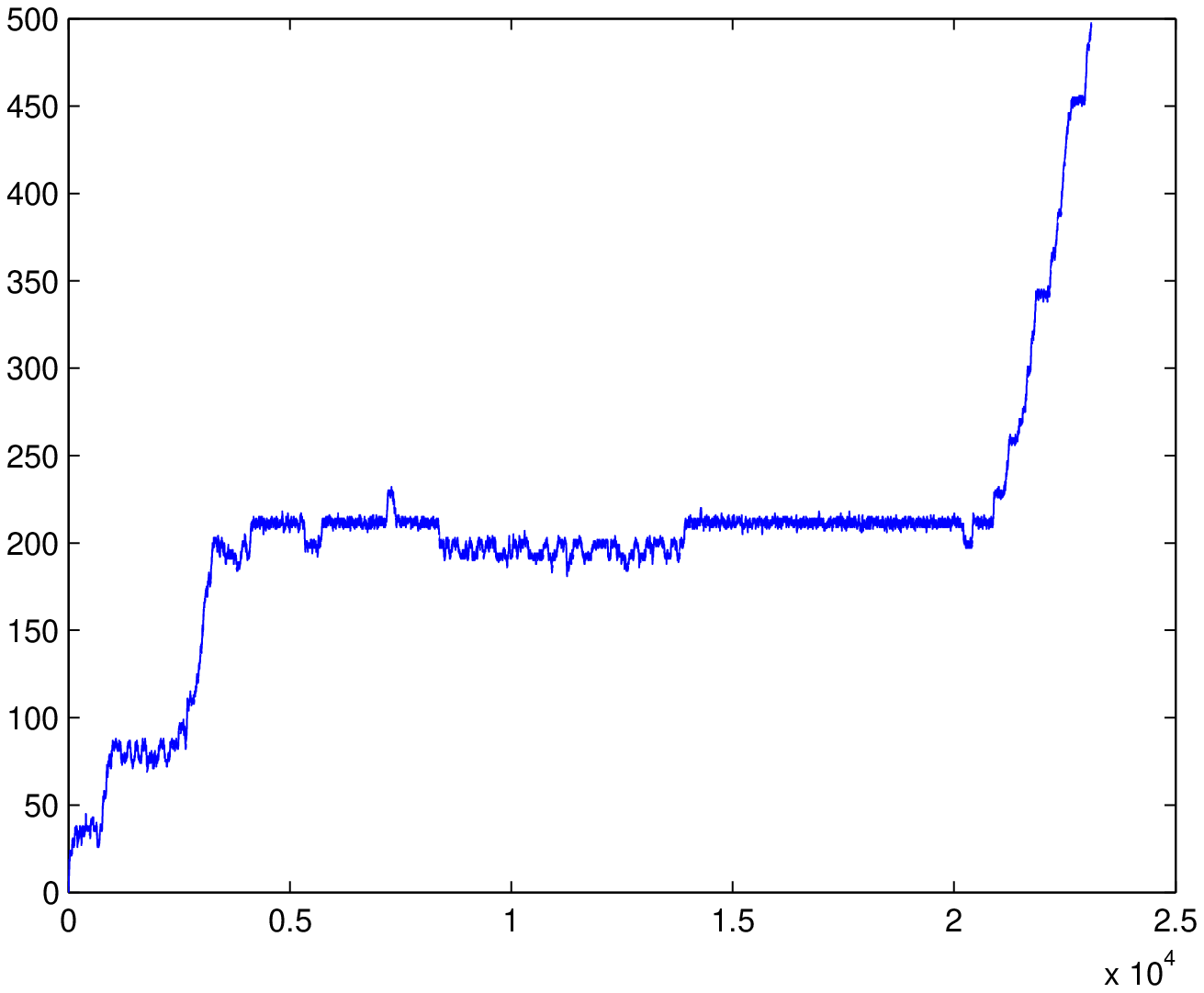}
\caption{A typical trajectory of the number of unzipping pairs, $M=500$.}
\end{center}
 \label{fig3bb}
\end{figure}

Moreover, all along the paper, we assume that  $g_0$ is injective on the first and second variables : i.e. for all $a \in \{A,T,C,G\}$ the functions
\begin{align} 
g_0(a,.):\gamma \rightarrow g_0(a,\gamma)\ \text{ and }\ g_0(.,a): \gamma \rightarrow g_0(\gamma,a)
\label{Hyp1}
\end{align}
are injective.
Note that this hypothesis matches with the experimental values of the energy (see Figure \ref{tab1}).

We describe now  briefly some results obtained by the physicists in the continuous time case.

\subsection{Some results obtained by the physicists}

In their papers \cite{Monasson0}, \cite{Monasson1} and \cite{Monasson2}, they assume first that there is no stacking effect, considering that $g_0$ at site $x$ is only a function of $b_x$ and that $(g_0(b_x),x)$ is a sequence of independent and identically distributed (i.i.d.) random variables. In this case they compute the maximum likelihood estimator for $b_x$. For a better accuracy they consider several total unzipping instead of a single one, that is to say they look at a sequence  of $R$ independent trajectories $(Y^{(l)}_.,l \leq R)$. 
In a second step they study the decreasing of the probability that this estimator gives a base sequence, and they show that this probability decreases exponentially; for all $x \leq M$, $$\Pp(b_x\neq \hat{b}_x) \leq \exp(-R/R_c(x)).$$ The constant $R_c$ is estimated numerically. For the general case (with stacking effects) they use Viterbi algorithm \cite{Viterbi} to compute the maximum of likelihood. Then they estimate the probability to be wrong with this estimator by using both analytic and numerical methods, they get a similar result than for the independent case.  \\
After some discussions with S. Cocco and R. Monasson some questions rise: is it possible to get a general and rigorous method which can be applied to all these cases ? how the choice of the force can be used in order to improve the results ? and what is the difference between the discrete and continuous time model ? We study all these questions in the present paper.

\subsection{A mathematical point of view}

First we would like to recall some basic facts for the discrete time model. If we forget, for the moment, that the state space is finite, $(X_k,\ k \in \N)$ is a random walk on a random environment on $\Z$ as Solomon defined it in \cite{Solomon}. We know, for example, that for i.i.d. sequence $(g_0(b_x),x)$, if $g_0(b_1)$ has mean zero and $g_1=0$, then $X$ is almost surely recurrent, it is transient on the other case. For the recurrent case
, $X$ is a Sinai's walk \cite{Sinai}, for the transient one, the first study is due to H. Kesten, M.V. Kozlov, F. Spizer \cite{KesKozSpi}. Here we are interested on what a trajectory of the walk can say about the environment, this aspect has not been studied a lot, there is a paper of O. Adelman, N. Enriquez \cite{AdeEnr} and for the special case of Sinai's walk a paper of P. Andreoletti \cite{Pierre7}. More precisely \cite{Pierre7}  shows that $g(x)$ can be estimated from a single trajectory of the walk by studying the asymptotics (in time) of the local time at site $x$,   which is the amount of time the walk spends at this site. However this approach can not be used to give informations on a particular site, typically on $g_0(b_x,b_{x+1})$ for a given $x$. \\
To move from Solomon walks to the problem asked by the physicists we have to make a sacrifice, more especially we are no longer interested in asymptotics in time. Indeed if the time goes to infinity that means that either we have to wait a very long time to reach the end of the molecule, or once it is totally unzipped it can move back to the beginning. This last case is not possible because when the end of the molecule is reached then the two separate strands can not reform the molecule properly.  In compensation, we only have to study the processes $X$ or $Y$ until they reach $M$, that is until time
\begin{align*}
\tau_M=\inf\{k>0, X_k=M\}. 
\end{align*}
 So we are interested in the discrete time process $(X_k,\ k \leq \tau_M)$ and the continuous one  $Y=(X_k,T_k,\ k \leq \tau_M)$. Note also that $M$ is the length of the DNA molecule, in term of the number of pairs, which can be big but finite. The other good news is the fact that the DNA molecule can be unzipped a large number of times, we have called, this number $R$, and we will be looking at asymptotics in this variable. Finally we are looking at $R$ independent trajectories denoted $(Z^{(l)}_{t_l},\ 1\leq l \leq R,\ 0 \leq t_l \leq \tau_M^{(l)})$ of random walks on a same unknown environment $b$ with $\tau_M^{(l)}$ the first time the walk $l$ hits $M$ ($Z$ is either $X$ or $Y=(X,T)$). 
 Also we will see that even if we assume that the $g_0$ are random, its distribution will not play an important role in our setting, essentially for two reasons the first one is the fact that the state space is finite and the second one is that we are looking at asymptotics in $R$.
The method is based on the fact that,  given the trajectory of a random walk (or $R$ random walks) on an environment $b$, the probability that a given estimator $\hat{b}$ gives a good sequence (typically $\Pp(b= \hat{b})$) depends only on elementary functions of the trajectory of this random walk.
 \noindent \\ For the \emph{discrete time model}, the important quantities are the number of times $X$ goes from $x$ to $x+1$ or to $x-1$, with $x \in [1,M-1]$:
\begin{align*}
L^{+,(l)}_x&:=\sum_{k=0}^{\tau_M^{(l)}-1}\un_{X^{(l)}_k=x;X^{(l)}_{k+1}=x+1},\quad L^{-,(l)}_x:=\sum_{k=0}^{\tau^{(l)}_M-1}\un_{X^{(l)}_k=x;X^{(l)}_{k+1}=x-1},\\
L^{+,R}_x&:=\sum_{l=1}^{R}L^{+,(l)}_x\ \text{ and }\ L^{-,R}_x:=\sum_{l=1}^{R}L^{-,(l)}_x.
\end{align*}
For the  \emph{continuous time model}, we have also to consider the total time spent in each site until the instant $\tau^{(l)}_M$ (which is as for the discrete case the hitting time of $M$ for the processes $X^{(l)}$): for any $x\in[1,M-1]$,
$$\tt^{(l)}_x=\sum_{i=0}^{\tau_M^{(l)}}T^{(l)}_{i}\un_{X^{(l)}_{i}=x}\ \text{ and }\ \tt^{R}_x=\sum_{l=0}^{R}\tt^{(l)}_x.$$

We will denote by $X^R$ ($Y^R$ in the continuous case) the $\sigma$-field generated by the trajectories of the $R$ independent random walks killed when they hit the coordinate $M$. $\Pp$ denotes the probability distribution of the whole system, whereas $\Pp^{\alpha}$ is the probability distribution of the walk for a given sequence of nucleotides $\alpha$. Also $\Ep^{\alpha}$ (resp. $\Vp^{\alpha}$ for the variance) is the expectation associated to $\Pp^{\alpha}$. \\
In Section \ref{sec2}, we start by the estimation base by base, we define the \emph{information} at site $x$ for both cases and show that the expression of the probability to get a given base at a site $x$ conditionally on the trajectories are a simple function of the information. Then we study the asymptotic (in $R$) of the probability that the maximum likelihood estimator gives a wrong base, we define and study a typical number of unzipping $R_c$ which measures the quality of our prediction. In a second time we are interested in the estimation of the whole molecule, we start with a general expression of the probability to get a specific sequence given the trajectories of $R$ random walks. We show that the global maximum likelihood estimator converges. Then we study the probability to make at least one mistake and then $h$ separate mistakes by considering this estimator. We focus on the continuous case, and just quote the differences with the discrete case. \\
In Section \ref{sec3}, we study some possible improvements. The first one consists on a local modification of the force in order to trap the system in a specific region. It has a direct effect on the time spent in this region and therefore on the quality of the prediction. For the second one we also modify the force, it is now function of the binding energies, and also of the space. It allows a fast unzipping till the bases we are interested, and a fast decreasing of the probability to be wrong.


\section{Bayes estimator, asymptotics in $R$ and typical number of needed unzipping $R_c$ \label{sec2}}

Most of the results of this section are based on the fact that we can compute easily the joint distribution $(L^{+,(1)}_x, L^{-,(1)}_x=L^{+,(1)}_{x-1}-1)$,  in fact it is not more difficult to get the joint distribution $L^+:=(L^{+,(1)}_x, 1\leq x \leq M-2)$ and as we have not found it in the literature, we first prove the following lemma for one random walk :
\begin{lemme}
\label{loiLp}
If we denote $k=(k_x\ ,\ x\in\{1,\dots,M-1\})$ with $k_{M-1}=1$, then 
\begin{align*}
 \Pp^b\left(L^{+}=k\right)
=&\prod_{x=2}^{M-1}\binom{k_x+k_{x-1}-2}{k_x-1}p_x^{k_x}(1-p_x)^{k_{x-1}-1}.
\end{align*}
 In particular, for $x\in\{2,\cdots,M-1\}$, 
\begin{align}
 & \Pp^b\left(L^{+}_x=k_x,L^{-}_x=k_{x-1}\right)= \Pp^b\left(L^{+}_x=k_x,L^{+}_{x-1}=k_{x-1}+1\right)\nonumber\\ 
 =&\binom{k_x+k_{x-1}-1}{k_x-1} (1-p_x)^{k_{x-1}} (p_x (1-\bar{p}_x))^{k_x-1}(p_x\bar{p}_x)\label{patapo}
 \end{align}
 where for simplicity we denote $L^{+}_x :=L^{+,(1)}_x, L^{-}_x :=L^{-,(1)}_x$ and
 \begin{align}
 \frac{1}{\bar{p}_{x}} & =\frac{1}{\bar{p}_{x}(b,f)}:=\frac{1}{P^b_{x+1}(\tau_{x}>\tau_M)}=\sum_{k=x+1}^{M-1}\exp\left(\beta\left(g(k)-g(x)\right)\right)+1, \label{pbar}
 \end{align}
with $\tau_x:=\inf\{k>0, X_k=x\}$. It is then easy to compute the following means and variances
\begin{align}
  \Ep^b\left(L_x^+\right) & = \frac{1}{\bar{p}_x},\ E^b\left(L_x^-\right)= \frac{e^{\beta \dg(b_x,b_{x+1})}}{\bar{p}_x}, \Ep^b\left(S^{(1)}_x\right)=\frac{e^{\beta g_0(b_x,b_{x+1})}}{r\bar{p}_x}, \nonumber \\
  \Vp^b\left(L_x^+\right)& =\frac{1}{\bar{p}_x}\left(\frac{1}{\bar{p}_x}-1\right)\text{ and }  \Vp^b\left(S^{(1)}_x\right)=\frac{e^{2\beta g_0(b_x,b_{x+1})}p_x}{r^2\bar{p}_x}. 
 \end{align} 
\end{lemme}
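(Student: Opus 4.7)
The plan is to establish the full joint law of $L^+$ by a direct combinatorial argument, then obtain (\ref{patapo}), (\ref{pbar}) and all the moments from a local ``event'' decomposition at the single site $x$. Throughout I will exploit the flow identity $L^-_x = L^+_{x-1} - 1$ for $2 \le x \le M-1$: because the walk starts at $1$ and is killed on first hitting $M$, the net number of crossings of the edge $(x-1,x)$ is $+1$, so if $L^+_x = k_x$ and $L^+_{x-1}=k_{x-1}$, the total number of visits to site $x$ is $n_x = k_x + k_{x-1} - 1$.

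For the joint law of $L^+$, I fix a tuple $(k_x)$ with $k_{M-1}=1$ and enumerate the sample paths producing it. At each site $x\in\{2,\ldots,M-1\}$ the walk performs $n_x$ successive actions, each up or down; the very last action at $x$ must be an up-jump, since the walk eventually leaves $x$ upward for good in order to reach $M$. The first $n_x-1$ actions then consist of $k_x-1$ ups and $k_{x-1}-1$ downs in some order, yielding $\binom{k_x+k_{x-1}-2}{k_x-1}$ arrangements at site $x$. A short induction on the step index shows that any collection of such per-site action lists reconstructs a unique trajectory (simulate the walk by reading off, at each visit to $x$, the next entry of the list at site $x$), and the flow identity guarantees that all lists are exhausted simultaneously when the walk first hits $M$. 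Since the conditional probability of each such trajectory factorises as $\prod_{x=2}^{M-1} p_x^{k_x}(1-p_x)^{k_{x-1}-1}$ (site $1$ contributes $1$ because $p_1=1$), the first formula follows.

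For (\ref{patapo}) I use a cleaner event decomposition focused on site $x$ alone. Each visit to $x$ begins exactly one of three events: (a) an up-jump followed by an eventual return to $x$, of probability $p_x(1-\bar{p}_x)$; (b) for $x\ge 2$, a down-jump, necessarily followed by a return to $x$ since the walk must still reach $M>x$, of probability $1-p_x$; or (c) an up-jump reaching $M$ without returning to $x$, of probability $p_x\bar{p}_x$. These three probabilities sum to one, so consecutive events are i.i.d.; event (c) is terminal and occurs once, while (a) and (b) occur $k_x-1$ and $k_{x-1}$ times respectively. Counting orderings of the first $k_x+k_{x-1}-1$ events and multiplying by their probabilities produces exactly $\binom{k_x+k_{x-1}-1}{k_x-1}(p_x(1-\bar{p}_x))^{k_x-1}(1-p_x)^{k_{x-1}}(p_x\bar{p}_x)$. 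The identity (\ref{pbar}) is then just the standard gambler's-ruin formula on $\{x,\ldots,M\}$: using $(1-p_j)/p_j = e^{\beta\dg(b_j,b_{j+1})}$ and the telescoping $\sum_{j=x+1}^k \dg(b_j,b_{j+1}) = g(k)-g(x)$, the formula $1/\bar{p}_x = \sum_{i=x}^{M-1}\prod_{j=x+1}^{i}(1-p_j)/p_j$ collapses to the stated sum.

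The moments fall out of the same decomposition. The per-visit termination probability $p_x\bar{p}_x$ makes $N_x=L^+_x+L^-_x$ geometric with that parameter, so $\Ep^b[N_x]=1/(p_x\bar{p}_x)$; conditional on $N_x$, the first $N_x-1$ events are i.i.d.\ with up-probability $p_x(1-\bar{p}_x)/(1-p_x\bar{p}_x)$, and a short calculation yields $\Ep^b[L^+_x]=1/\bar{p}_x$. The expression for $\Ep^b[L^-_x]$ follows from $L^-_x = L^+_{x-1}-1$ combined with the telescoping $1/\bar{p}_{x-1} = 1+e^{\beta\dg(b_x,b_{x+1})}/\bar{p}_x$, and $\Vp^b[L^+_x]$ from the standard compound-variance identity after algebraic simplification. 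For the continuous-time quantities, $S^{(1)}_x$ is a sum of $N_x$ independent $\mathrm{Exp}(\lambda_x)$ holding times with $\lambda_x = r(e^{-\beta g_0(b_x,b_{x+1})}+e^{-\beta g_1(f)})$; noting $p_x\lambda_x = re^{-\beta g_0(b_x,b_{x+1})}$, Wald's identity and the random-sum variance formula then deliver the claimed mean and variance. The one real obstacle is the combinatorial first step: rigorously checking the bijection between per-site action lists and trajectories with the prescribed $L^+$ counts, which I would handle by spelling out the reconstruction map and inducting on the step number to verify that it terminates correctly at $M$.
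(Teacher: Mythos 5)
Your proof is correct in substance but is organized differently from the paper's. The paper first asserts (\ref{patapo}) directly from the Markov property of $X$ given $b$ -- essentially your three-outcome excursion decomposition at site $x$, which it does not spell out -- and then gets the joint law of $L^+$ by a top-down recursion: setting $A_n=\bigcap_{x=n}^{M-1}\{L^+_x=k_x\}$ it writes $\Pp^b(A_1)=\Pp^b(L^+_1=k_1|L^+_2=k_2)\Pp^b(A_2)$ and reads the conditional law $\Pp^b(L^{+}_{x-1}=k_{x-1}|L^{+}_x=k_x)=\binom{k_x+k_{x-1}-2}{k_x-1}(1-p_x)^{k_{x-1}-1}p_x^{k_x}$ off (\ref{patapo}). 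You instead obtain the joint law in one stroke by path counting (coding a trajectory by per-site lists of up/down decisions with last entry ``up''), and prove (\ref{patapo}) independently by the excursion decomposition; the gambler's-ruin identification (\ref{pbar}) and the geometric/Wald moment computations coincide with what the paper leaves implicit. What your route buys is that you never need the conditional-independence step $\Pp^b(L^+_1=k_1|A_2)=\Pp^b(L^+_1=k_1|L^+_2=k_2)$, which the paper attributes to ``the Markov property'' without detail; what it costs is the bijection between trajectories and admissible list collections, which you rightly flag as the one real obstacle. Your sketch does close it: the number of possible arrivals at $x$ (down entries at $x+1$ plus up entries at $x-1$, plus the initial position at $x=1$) equals the list length at $x$, so the reconstruction never stalls, and since the unique up entry at $M-1$ is last, hitting $M$ forces all lists to be exhausted by downward induction. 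Also, ``the three probabilities sum to one, so consecutive events are i.i.d.'' is a non sequitur as phrased: the independence of successive excursion types comes from the strong Markov property at the successive visit times to $x$; the sum-to-one only checks the outcomes are exhaustive.

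One concrete inaccuracy: your last claim about $S^{(1)}_x$ is too quick. With $N_x=L^+_x+L^-_x$ geometric of parameter $p_x\bar{p}_x$ and holding times i.i.d.\ exponential of rate $\lambda_x=r\big(e^{-\beta g_0(b_x,b_{x+1})}+e^{-\beta g_1(f)}\big)$ independent of the jump chain, $S^{(1)}_x$ is exponential of rate $p_x\bar{p}_x\lambda_x=r e^{-\beta g_0(b_x,b_{x+1})}\bar{p}_x$; the random-sum variance formula therefore gives $\Vp^b\big(S^{(1)}_x\big)=e^{2\beta g_0(b_x,b_{x+1})}/(r^2\bar{p}_x^2)$, not the value $e^{2\beta g_0(b_x,b_{x+1})}p_x/(r^2\bar{p}_x)$ stated in the lemma, which equals $\Ep^b[N_x]\Vp(T_1)$ alone, i.e.\ it omits the $\Vp(N_x)(\Ep T_1)^2$ term. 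So your method does not ``deliver the claimed variance''; it delivers the former expression, and you should either flag this discrepancy with the statement or redo that computation explicitly (the mean of $S^{(1)}_x$ and all the discrete-time moments do come out exactly as stated).
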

 \begin{proof} 
 Formula \eqref{patapo} of the lemma can easily be obtained by using the Markov property of $X$ for a given sequence $b$, the mean and the variance of $L^+_x$ and $S^{(1)}_x$ are direct consequences. Therefore we just prove the expression of the joint distribution of $L^+$. Define for $n\geq1$, the event
$$A_n:=\bigcap_{x=n}^{M-1}\{L^+_x=k_{x}\}$$ where $k_{M-1}=1$ (there is always only one jump from $M-1$ to $M$). Then,
\begin{align*}
 \Pp^b\left(L^{+}=k\right)&=\Pp^b\left(A_{1}\right)=\Pp^b\left(L^+_{1}=k_{1}\big|A_{2}\right)\Pp^b\left(A_{2}\right)\\
 &=\Pp^b\left(L^+_{1}=k_{1}\big|L^+_{2}=k_{2}\right)\Pp^b\left(A_{2}\right)
\end{align*}
where the second equality comes from the Markov property of the walk $X$ given $b$. Formula \eqref{patapo} implies for any $x\in\{2,\cdots,M-1\}$,
\begin{align*}
 \Pp^b\left(L^{+}_{x-1}=k_{x-1}\big|L^{+}_x=k_x\right) =\binom{k_x+k_{x-1}-2}{k_x-1} (1-p_x)^{k_{x-1}-1}p_x^{k_x},
 \end{align*}
thus,
\begin{align*}
 \Pp^b\left(L^{+}=k\right)&=\binom{k_2+k_{1}-2}{k_2-1} p_2^{k_2}(1-p_2)^{k_{1}-1}\Pp^b\left(A_{2}\right)
\end{align*}
and we get the result of Lemma \ref{loiLp} recursively.
\end{proof}

\subsection{Prediction site by site}

\emph{In this section we always assume that $f$ is constant.} 
Let us begin with a general proposition true for the continuous and the discrete time cases, then we discuss the differences between the two cases.
First we define the following function $i_x$, called \emph{local information at site $x$}  of the system, it differs for the two cases.
Let $x\in\{2,\cdots,M-1\}$ and $(\alpha_{x-1},\alpha_x,\alpha_{x+1}) \in \{A,T,C,G\}^3$.\\
\emph{For the discrete case,} the information is defined by
\begin{align*}
&i_x(\alpha_{x-1},\alpha_x,\alpha_{x+1}):=\\
L_x^{+,R}&\log (1+e^{\beta \dg(\alpha_x ,\alpha_{x+1}})+L_x^{-,R}\log  (1+e^{-\beta \dg(\alpha_x ,\alpha_{x+1})}) \\
+L_{x-1}^{+,R}&\log (1+e^{\beta \dg(\alpha_{x-1},\alpha_x)})+L_{x-1}^{-,R}\log(1+e^{-\beta \dg(\alpha_{x-1},\alpha_x)}).
\end{align*}
and \emph{for the continuous case,} by
\begin{align*}
i_x(\alpha_{x-1},\alpha_x,\alpha_{x+1})&:=
\beta g_0 (\alpha_x,\alpha_{x+1}) L^{+,R}_x+S_x^R r e^{-\beta g_0(\alpha_x,\alpha_{x+1})} \\
&+\beta g_0 (\alpha_{x-1},\alpha_x) L^{+,R}_{x-1}+S^R_{x
-1} re^{-\beta g_0(\alpha_{x-1},\alpha_x)}. 
\end{align*}
We are now ready to state the

\begin{proposition}
\label{quipu1} For all $x \in \{2, \cdots,M-1\}$, and for $\alpha_x \in \{A,T,C,G\}$, denoting $b^x=(b_1,b_2,\cdots,b_{x-1},b_{x,+1},\cdots,b_{M-1})$, we have
\begin{align}
\Pp\left(b_x= \alpha_x |Z^R,\ b^x \right) 
&=  \frac{\exp(-I_x(\alpha_x,b))}{\sum_{\bar{\alpha}_x}\exp(-I_x(\bar{\alpha}_x,b))}  \label{ptbw}
\end{align}
where
\begin{align*}
I_x(u,b)=I_x(u,b)(Z^R)&:=i_x(b_{x-1},u,b_{x+1})-\log \Pp(b_x=u|b^{x}),
\end{align*}
and $Z^R$ is either $X^R$ for the discrete case or $Y^R$ for the continuous one.
The maximum likelihood estimator $\hat{b}_x$ for $b_x$, is given by:
\begin{align}
\hat{b}_x=\sum_{\alpha_x \in \{A,T,C,G\}} \alpha_x \un_{\{I_x(\alpha_x,b)=\min_{\bar{\alpha}_x}I_x(\bar{\alpha}_x,b) \}}. \label{max1}
\end{align}
Assume that Hypothesis (\ref{Hyp1}) is satisfied then the maximum likelihood estimator converges almost surely to $b_x$.
Moreover,
\label{quipu2} 
\begin{align}
\lim_{R\rightarrow\infty}-\frac{1}{R}\log\Pp\left(\hat{b}_x \neq {b}_x|Z^R,b^x \right)&= 1/R_c(x)>0; \label{eq10b}
\end{align} 
$R_c(x)$ is called \emph{the typical number of random walks at site $x$}. 
For the discrete case,  $\Pp$-almost surely,
\begin{align*}
 \frac{1}{R_c(x)} = \frac{ \Delta {G}^-(b_{x-1})}{\bar{p}_{x-1}}+\frac{ \Delta {G}^+(b_{x+1})}{\bar{p}_x},
\end{align*}
and for $R$ large enough
\begin{align*}
 \frac{1}{R_c(x)} = \Delta {G}^-(b_{x-1}) \frac{L_{x-1}^{+,R}}{R}+\Delta {G}^+(b_{x+1})\frac{L_x^{+,R}}{R}+\epsilon_x(R), 
\end{align*}
where $ \Delta {G}^+(b_{x+1})$ and $ \Delta {G}^+(b_{x-1})$ are two positive numbers (see (\ref{dG})).  
For the continuous case we get the same expression but replacing the constant $ \Delta {G}^+(b_{x+1})$ (respectively $ \Delta {G}^+(b_{x-1})$) by $ \Delta {F}^+(b_{x+1})$ (respectively $ \Delta {F}^+(b_{x-1})$), see also their expression in (\ref{dF}). Also for the discrete and continuous case we have $\epsilon_x(R)\approx ( R \log \log R)^{1/2} {L^{+,(R)}_{x}}/{R} $
. We denote $a(R) \approx d(R)$ if there exists a positive bounded  number $c$ such that $a(R) = c*d(R)$. 
\end{proposition}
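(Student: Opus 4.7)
The plan has four steps, matching the four claims of the proposition.

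First, I would derive the posterior formula (\ref{ptbw}) by a direct Bayes computation. Conditionally on the environment, the likelihood of the trajectory $Z^R$ factorises: in the discrete case Lemma \ref{loiLp} shows it is a product of binomial terms $p_x^{L^{+,R}_x}(1-p_x)^{L^{-,R}_x}$ (times combinatorial factors that do not depend on $b$), and in the continuous case one adds the exponential densities $\exp(-r e^{-\beta g_0(b_x,b_{x+1})} S^R_x)$ for the waiting times. When one conditions on $b^x$ and varies only $b_x$, only the transition probabilities at sites $x-1$ and $x$ change, so $\log \Pp(Z^R|b_x=u,b^x)$ equals $-i_x(b_{x-1},u,b_{x+1})$ plus a quantity independent of $u$. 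Multiplying by the prior $\Pp(b_x=u|b^x)$, normalising and identifying $I_x(u,b)$ yields (\ref{ptbw}), and the maximum likelihood estimator (\ref{max1}) is immediate.

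Second, I would prove a.s. convergence $\hat{b}_x\to b_x$. Since $(L^{+,(l)}_x,L^{-,(l)}_x,S^{(l)}_x)_{l\ge 1}$ are i.i.d. with finite means (given by Lemma \ref{loiLp}), the strong law gives $L^{+,R}_x/R\to 1/\bar p_x$, $L^{-,R}_x/R\to e^{\beta\dg(b_x,b_{x+1})}/\bar p_x$ and the analogous continuous-case limits. Hence $R^{-1}\bigl(i_x(b_{x-1},u,b_{x+1})-i_x(b_{x-1},b_x,b_{x+1})\bigr)$ converges a.s. to a non-random function $\Phi_x(u)$ of $u$. A short computation shows that $\Phi_x$ is a relative-entropy-type expression that vanishes when $u=b_x$ and is strictly positive otherwise; the strict positivity uses precisely the injectivity hypothesis (\ref{Hyp1}), since only then do distinct values of $u$ produce distinct transition probabilities at sites $x-1$ or $x$. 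Therefore $b_x$ is eventually the unique minimiser of $I_x(\cdot,b)$.

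Third, I would obtain the exponential rate (\ref{eq10b}) by a Laplace-type argument on (\ref{ptbw}). Once $R$ is large enough so that $\hat b_x=b_x$, we have
\begin{equation*}
\Pp\bigl(\hat b_x\neq b_x\,|\,Z^R,b^x\bigr)=\frac{\sum_{u\neq b_x}\exp(-I_x(u,b))}{\sum_{u}\exp(-I_x(u,b))},
\end{equation*}
and the denominator is dominated by the $u=b_x$ term, while the numerator is dominated by its smallest exponent. Dividing by $R$ and taking logarithms, the prior contribution $-R^{-1}\log\Pp(b_x=u|b^x)$ vanishes, so the limit equals $\min_{u\neq b_x}\Phi_x(u)$, which I would identify with $\Delta G^-(b_{x-1})/\bar p_{x-1}+\Delta G^+(b_{x+1})/\bar p_x$ (resp.\ with $\Delta F^{\pm}$ in the continuous case) by reading off the two site-contributions separately. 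The quantities $\Delta G^\pm$ are then the positive minima over the wrong choices of $u$ of the corresponding single-site relative entropies between the true and the candidate transition laws.

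Fourth, for the finite-$R$ expansion, I would write $1/R_c(x)$ as above but keep $L^{+,R}_{x-1}/R$ and $L^{+,R}_x/R$ instead of their limits. The remainder $\epsilon_x(R)$ comes from (i) fluctuations of $L^{-,R}_\cdot/R$ around their means, (ii) the sub-dominant competitor in the numerator, and (iii) the correction from the denominator. By the law of the iterated logarithm applied to the i.i.d.\ sums $L^{\pm,R}_\cdot$, each of these is of order $(R\log\log R)^{1/2}L^{+,R}_x/R$, giving the stated $\epsilon_x(R)$. The main technical obstacle is step three: controlling the random ratio of exponential sums uniformly in $R$ in order to replace the $\limsup/\liminf$ statement by the clean $\lim$, which requires a careful use of the concentration of the $L^{\pm,R}_\cdot$ around their means together with the strict positivity of $\Phi_x$ on $\{u\neq b_x\}$.
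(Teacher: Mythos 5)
Your proposal follows essentially the same route as the paper's proof: Bayes' formula together with Lemma \ref{loiLp} for \eqref{ptbw}, the strong law of large numbers plus the injectivity hypothesis \eqref{Hyp1} to identify $b_x$ as the unique asymptotic minimiser of $I_x(\cdot,b)$, a Laplace-type comparison of the exponential weights for the rate \eqref{eq10b}, and the law of the iterated logarithm for $\epsilon_x(R)$. The one point you gloss over --- identifying $\min_{u\neq b_x}$ of the limiting exponent with the two site-contributions $\Delta G^-(b_{x-1})/\bar{p}_{x-1}+\Delta G^+(b_{x+1})/\bar{p}_x$ taken separately, even though the optimising $u$ for the two terms need not coincide --- is exactly the same gloss the paper makes around \eqref{dG}, so it does not constitute a gap relative to the paper's own argument.
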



We first prove the result and then discuss about the expression of $R_c(x)$.
\begin{proof}
We only give a proof in the discrete case. 
 Formula (\ref{ptbw}) is a simple consequence of Bayes formula together with Lemma \ref{loiLp} and the expression of $\hat{b}_x$ follows. By the strong law of large number (LLN), $\Pp^{b}$-almost surely, 
  \begin{align*}
& \lim_{R \rightarrow + \infty}\frac{1}{R} I_x({\alpha}_x,b)  \nonumber\\
&= \frac{1}{\bar{p}_{x}}\left(\log(1+e^{\beta \dg(\alpha_{x} ,b_{x+1})})+e^{\beta \dg(b_{x},{b}_{x+1})}\log  (1+e^{-\beta \dg(\alpha_x ,b_{x+1})})\right)+ \nonumber \\
&  \frac{1}{\bar{p}_{x-1}} \left(\log(1+e^{\beta \dg( b_{x-1},\alpha_x)})+e^{\beta \dg(b_{x-1},{b}_x)}\log  (1+e^{-\beta \dg(b_{x-1},\alpha_x )})\right).
\end{align*} 
 Recall that $\dg$ is defined in (\ref{Delta}).
As the function $$x\rightarrow\log(1+x)+c\log(1+1/x)$$ is minimal iff $x=c$, asymptotically the right-hand side of the previous equality is minimal iff $\dg(b_{x-1},\alpha_x )=\dg(b_{x-1},b_x)$ and $\dg(\alpha_{x},b_{x+1})=\dg(b_{x},b_{x+1})$, therefore with Hypothesis (\ref{Hyp1}), $\alpha_x=b_x$ and the estimator $\hat{b}_x$ is almost surely convergent.

Now we are interested in the difference $ \lim_{R \rightarrow + \infty}\frac{1}{R} (I_x(\hat{b}_x,b)-I_x(\bar{\alpha}_x,b) )$, first let us define the function 
$$ G_a(u):=\log\left(\frac{1+e^{\beta u}}{1+e^{\beta a}}\right)+e^{\beta  a}\log\left(\frac{1+e^{-\beta u}}{1+e^{-\beta a}}\right),$$ 
notice that $G_a(x)$ is positive for all $x\neq a$ and $G_a(a)=G'_a(a)=0$.
$\Pp^{b}$ almost surely  for all $\bar{\alpha}_x \neq \hat{b}_x$
 \begin{align*}
&\lim_{R \rightarrow + \infty}\frac{1}{R}(I_x(\hat{b}_x,b)-I_x(\bar{\alpha}_x,b))  \nonumber \\
&=  -\frac{1}{\bar{p}_{x}}G_{\dg(b_{x},{b}_{x+1})} \left(\dg(\bar{\alpha}_{x},b_{x+1})\right)- \frac{1}{\bar{p}_{x-1}} G_{\dg(b_{x-1},{b}_{x})} \left(\dg(b_{x-1},\bar{\alpha}_{x})\right), 
\end{align*} 
By Hypothesis (\ref{Hyp1}) of local injectivity of $g_0$, we get that $\Pp^{b}$-almost surely, for all $\bar{\alpha}_{x} \neq \hat{b}_{x}$, $\lim_{R \rightarrow + \infty}\frac{1}{R}(I_x(\hat{b}_x,b)-I_x(\bar{\alpha}_x,b))$ is  strictly negative.
Also notice that 
\begin{align*} 
\sum_{\bar{\alpha}_x \neq \alpha_x}e^{I_x(\alpha_x,b)-I_x(\bar{\alpha}_x,b)} &= e^{I_x(\alpha_x,b)-I_x(\hat{b}_x,b)}\sum_{\bar{\alpha}_x \neq \alpha_x}e^{I_x(\hat{b}_x,b)-I_x(\bar{\alpha}_x,b)} \\
&= e^{I_x(\alpha_x,b)-I_x(\hat{b}_x,b)}\left(1+\sum_{\bar{\alpha}_x \neq \alpha_x, \hat{b}_x}e^{I_x(\hat{b}_x,b)-I_x(\bar{\alpha}_x,b)}\right),
\end{align*}
therefore we have that $\Pp^{b}$-almost surely for $R$ large enough 
\begin{align} 
& e^{\frac{R}{\bar{p}_{x-1}}G_{\dg(b_{x},{b}_{x+1})} \left(\dg(\alpha_{x},b_{x+1})\right) + \frac{R}{\bar{p}_{x-1}} G_{\dg(b_{x-1},{b}_{x})} \left(\dg(b_{x-1},\alpha_{x})\right)   -e_x(R)} \nonumber \\  & \leq \sum_{\bar{\alpha}_x \neq \alpha_x}\exp(I_x(\alpha_x,b)-I_x(\bar{\alpha}_x,b)) \nonumber \\
& \leq 4  e^{\frac{R}{\bar{p}_{x-1}}G_{\dg(b_{x},{b}_{x+1})} \left(\dg(\alpha_{x},b_{x+1})\right) + \frac{R}{\bar{p}_{x-1}} G_{\dg(b_{x-1},{b}_{x})} \left(\dg(b_{x-1},\alpha_{x})\right)   +e_x(R)},  \nonumber
  \end{align}
where $e_x(R)$  is the error we make by using the LLN and by the presence of the $\log \Pp(b_x=u|b^{x})$ in the expression of $I$, we examine this term at the end of the proof. Now define
\begin{align} 
 \Delta {G}^+(b_{x+1}) &:=\max_{\alpha_x\neq b_x}(G_{\dg(b_x,b_{x+1})}(\dg(\alpha_x,b_{x+1})), \label{dG} \\
 \Delta {G}^-(b_{x-1}) &:=\max_{\alpha_x\neq b_x}(G_{\dg(b_{x-1},b_x)}(\dg(b_{x-1},\alpha_x)), \nonumber
\end{align}
and finally notice that $ \Pp\left( b_x \neq \hat{b}_x |Z^R,\ b^x \right)$ can be written like 
\begin{align}
 & \Pp\left( b_x \neq \hat{b}_x |Z^R,\ b^x \right) \nonumber \\ & = \sum_{{\alpha}_x \neq \hat{b}_x } \Pp\left( b_x =\alpha_x |Z^R,\ b^x \right) \nonumber \\
&=  \sum_{{\alpha}_x  \neq \hat{b}_x } \left(1+ \sum_{\bar{\alpha}_x \neq \alpha_x}\exp(I_x({\alpha}_x,b)-I_x(\bar{\alpha}_x,b)) \right)^{-1},
\end{align}
we get that $\Pp^b$ almost surely for $R$ large enough
\begin{align*}
&  - \log \left(1+4e^{\frac{R}{\bar{p}_{x-1}} \Delta {G}^-(b_{x-1}) +\frac{R}{\bar{p}_{x}} \Delta {G}^+(b_{x+1}) + e_x(R) } \right) \leq \log  \Pp\left( b_x \neq \hat{b}_x |Z^R,\ b^x \right)  \nonumber \\ & \leq \log 4- \log \left(1+e^{\frac{R}{\bar{p}_{x-1}}\Delta {G}^-(b_{x-1})+\frac{R}{\bar{p}_{x}}\Delta {G}^+(b_{x+1})  + e_x(R)} \right),
 \end{align*}
that can be written like:
 \begin{align*}
  \left |\frac{-\log  \Pp\left( b_x \neq \hat{b}_x |Z^R,\ b^x \right)}{R}+ \frac{1}{R_c(x)} \right|
 \leq \frac{e_x(R)+\textrm{const}}{R},
 \end{align*}
where "const" is a constant real number.

To finish the proof we have to study $e_x(R)$. By the iterated logarithm law (ILL) $e_x(R)\approx (R \log \log R)^{1/2} ( (\Vp^{b} L^{+,(1)}_x)^{1/2}+(\Vp^{b} L^{-,(1)}_x)^{1/2}) $  $-\log \Pp(b_x=u|b^{x}) $ where $(\Vp^{b}(L_x^+))^{1/2}$ as well as  $(\Vp^{b}(L_x^-))^{1/2}$ behaves like $1/\bar{p}_x$ (see Lemma \ref{loiLp}). Then  $\epsilon_x(R)  \approx \sqrt{R\log\log R}/\bar{p}_x$, this gives the expression of the proposition by using the LLN. Also to move from the result under the measure $\Pp^b$  to the result under $\Pp$ we just notice that what we get is true for all sequences $b$.\\
\end{proof}

Notice that $1/R_c(x)$ is the rate function in the large deviation theory so the above proposition gives  informations on the decrease of the probability to be wrong
.\\
\emph{The discrete case}. We have obtained that $\Pp$-almost surely for $R$ large enough
\begin{align*} \frac{1}{R_c(x)} =
 \frac{1}{\bar{p}_{x-1}}\Delta {G}^-(b_{x-1}) +\frac{1}{\bar{p}_{x}} \Delta{G}^+(b_{x+1}),
 \end{align*}
first note that we want $\Delta {G}^-(b_{x-1})$ and $\Delta {G}^+(b_{x+1})$ as large as possible, unfortunately they may be very small, indeed
\begin{align*} 
G_a(u)={\beta^2 (u-a)^2} \frac{ G_a''(a)}{\beta^2}+o(u-a)^2,
\end{align*} 
thus  when the correct energy $g_0(b_{x-1},b_x)$ (take for example 1,78 in the table of energies Figure \ref{tab1}) is close to another one (take 1,55) we have
\begin{align*}
\Delta {G}^-(b_{x-1}) &\thickapprox \beta^2 \min_{\alpha_x\neq b_x}(\dg(b_{x-1},\alpha_x)-\dg(b_{x-1},b_x))^2 \\
&=  \beta^2 \min_{\alpha_x\neq b_x}(g_0(b_{x-1},\alpha_x)-g_0(b_{x-1},b_x))^2, 
\end{align*}
so $\Delta {G}^-(b_{x-1})$ and $\Delta {G}^-(b_{x-1})$ can be small. However, this is not the only and worst case. Indeed, assume $u<0$ and $a<0$, then for large $\beta$,
\begin{align} 
G_a(u) \thickapprox \exp (\beta a)(\exp(\beta(u-a))-1-\beta(u-a)), \label{M10}
\end{align} 
which exponentially decreases with $\beta$. This situation may appear when $f$ is large and the binding energy at site $x$ of the molecule is weak. We will see in Section \ref{sec3} a method to avoid this situation. \\
Turning back to the expression of $1/R_c(x)$, we also notice that
\begin{align}
\frac{1}{\bar{p}_x}&\geq \exp(\beta(\max_{ x+1 \leq l\leq M-1}(g(l)-g(x)) )) \nonumber \\
&= \exp\left(\beta M_x  \right), \label{Mx} 
\end{align}
with $M_x := \max_{ x+1 \leq l\leq M-1}\left\{\sum_{k=x+1}^lg_0(b_k,b_{k+1})-(l-x)g_1(f) \right\}$.
So, as expected, the convergence is better if there are obstacles in the path from $x$ to $M$. Finally, we have $\Pp$-almost surely
\begin{align} 
1/R_c(x)   \geq 
 \exp\left(\beta M_{x-1} \right)\Delta {G}^- +\exp\left(\beta M_{x} \right) \Delta{G}^+,
\end{align}
with $$ \Delta {G}^-:=\min\{\Delta {G}^-(\gamma), \gamma \in \{A,T,C,G\} \}$$ and $$ \Delta {G}^+:=\min\{\Delta {G}^+(\gamma), \gamma \in \{A,T,C,G\} \}.$$

\noindent \\ \emph{ Formula useful for the estimation.} As we have seen above, $R_c(x)$ characterizes locally the environment. 
However, what is really important to control the quality of estimation at a point $x$ is not the number of walks $R$ but the total number of passages at this point, $L^R_x:=L^+_x+L^-_x$. So  we define the \emph{typical number of visits at site $x$}, $L_c(x)$ by  
\begin{align} 1/L_c(x):=\lim_{R\rightarrow +\infty } \frac{- \log \Pp\left(\hat{b}_x \neq {b}_x|Z^R,b^x \right)}{L^R_x},
\end{align}
and we get $\Pp$-almost surely
$$\frac{1}{L_c(x)}\geq \frac{1}{2}( \Delta G^+\wedge \Delta G^-).$$

\noindent \\ \emph{ Total amount of time to reach $M$.} An other important factor is the time required to unzip totally $R$ times the DNA molecule. It should not be too large. This time is given by:
\begin{align}
\tau_M^{R}=\sum_{l=1}^R{\tau_M^{(l)}}=\sum_{l=1}^R \sum_{x=1}^{M-2}(L_{x-1}^{+,(l)}+L_x^{+,(l)}-1). \label{time2r}
\end{align}
And by the LLN, $\Pp$ almost surely
\begin{align*}
R \exp(\beta \max _x M_x) \lesssim E^b\left[\tau_M^R\right] =&R \sum_{x=1}^{M-2}\left(\frac{1}{\bar{p}_{x-1}}+\frac{1}{\bar{p}_x}-1\right) \\
\lesssim & R M \exp(\beta \max _x M_x).
\end{align*}
So, as seen in the previous  paragraph (see (\ref{Mx})), large $\beta$ can lead to a better prediction, however it slows down the system. Of course it is worse if there is large obstacles between $x$ and  $M$  because in this case $M_x$ is large too. 

\noindent \\ \emph{ The continuous time case.}
\noindent Like for the discrete case we first define a function $F:\R \rightarrow \R_+$ by
\begin{align}
F(u)&=e^{\beta u}-1-\beta u\text{ and } \nonumber \\
\Delta F^-(\gamma)&=\min\left(F(g_0(\gamma,u)-g_0(\gamma,v)), u, v\in \{A,T,C,G\}, u\neq v\right), \nonumber \\
\Delta F^+(\gamma)&=\min\left(F( g_0(u,\gamma)-g_0(v,\gamma)), u, v\in \{A,T,C,G\}, u\neq v\right), \label{dF} \\
\Delta F^-&=\min\left(\Delta F^-(\gamma),\gamma\in \{A,T,C,G\}\right), \nonumber \\
\Delta F^+&=\min\left(\Delta F^+(\gamma),\gamma\in \{A,T,C,G\}\right) \nonumber.
\end{align}
Then a study, similar to the discrete case, leads to, $\Pp$ almost surely
$$1/R_c(x)\geq \frac{\Delta F^+}{\bar{p}_x}+\frac{\Delta F^-}{\bar{p}_{x-1}}\text{ and } 1/L_c(x)\geq \frac{1}{2} (\Delta F^+\wedge \Delta F^-).$$
Note that the bad case observed for the discrete time model (see equation \eqref{M10}) does not appear here, however when $(u-a)$ is small, $F(u-a)$ is as $G_a(u)$ of the order of $(a-u)^2$.

In the next section we look at the entire molecule, we define global information and study the decreasing of the probability to make a mistake by using the global maximum likelihood estimator.

 \subsection{Inferring the whole molecule \label{sec2.2}}

Define the \emph{global information} $I$ of the whole molecule, let $\alpha \in \{A,T,C,G\}^M$, for the  \emph{discrete case $X^R$,}
\begin{align*}
 I(\alpha)&:=-\log \Pp(b=\alpha)+\\
 &\sum_{x=1}^{M-1} L_x^{+,R}\log (1+e^{\beta \dg(\alpha_x,\alpha_{x+1})})+L_x^{-,R}\log  (1+e^{-\beta \dg(\alpha_x,\alpha_{x+1})}).
\end{align*}
and for \emph{the continuous case $Y^R=(X^R,T^R)$,} 
\begin{equation}
I(\alpha)=-\log \Pp(b=\alpha)+\sum_{x=1}^{M-1}\beta g_0(\alpha_{x},\alpha_{x+1})L^{+,R}_x+re^{-\beta g_0(\alpha_x,\alpha_{x+1})}S^R_x
\end{equation}
The global maximum likelihood estimator converges to $b$ :
\begin{theoreme}\label{estiglob}
For any $\alpha\in\{A,T,C,G\}^M$ with $\alpha_1=b_1$, we have: 
\begin{equation}
    \Pp\left(b=\alpha\big|Z^R,\ b_1\right)= \frac{e^{-I(\alpha)}}{\sum_{
    \bar{\alpha}}e^{-I(\bar{\alpha})}}.
\end{equation}
The global maximum likelihood estimator is the element $\tilde{b}$ of $\{A,T,C,G\}^M$ which minimizes the function $I$. 
\\
Assume that (\ref{Hyp1}) is satisfied then the global maximum likelihood estimator converges almost surely to the DNA chain $b$.
\end{theoreme}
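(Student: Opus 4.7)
The plan is to deduce the Bayesian formula from the joint law computed in Lemma \ref{loiLp} and then obtain almost sure consistency of $\tilde b$ by an LLN argument that recycles, site by site, the computation carried out in the proof of Proposition \ref{quipu1}.

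For the explicit formula, I would start from the independence of the $R$ unzippings and apply Lemma \ref{loiLp} to each trajectory, getting in the discrete case
\[ \Pp^{\alpha}(Z^R=z^R) = C(z^R) \prod_{x=1}^{M-1} p_x(\alpha)^{L_x^{+,R}}(1-p_x(\alpha))^{L_x^{-,R}}, \]
with $C(z^R)$ a product of binomials that depends only on $z^R$, not on $\alpha$. Using $p_x=(1+e^{\beta\Delta g(\alpha_x,\alpha_{x+1})})^{-1}$ one checks $-\log p_x=\log(1+e^{\beta\Delta g})$ and $-\log(1-p_x)=\log(1+e^{-\beta\Delta g})$, so $-\log \Pp^{\alpha}(Z^R)$ equals the sum appearing in the discrete-case $I(\alpha)$, up to the $\alpha$-independent term $-\log C(z^R)$. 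Bayes' formula
\[ \Pp(b=\alpha\,|\,Z^R,b_1)=\frac{\Pp^{\alpha}(Z^R)\,\Pp(b=\alpha\,|\,b_1)}{\sum_{\bar\alpha}\Pp^{\bar\alpha}(Z^R)\,\Pp(b=\bar\alpha\,|\,b_1)} \]
then yields the desired quotient because $C(z^R)$ cancels, and adding $-\log\Pp(b=\alpha\,|\,b_1)$ reconstitutes $I(\alpha)$. For the continuous case the same scheme applies with the density of $Y^R$ contributing the extra factors $(re^{-\beta g_0})^{L_x^{+,R}}$, $(re^{-\beta g_1(f)})^{L_x^{-,R}}$ and $\exp(-r(e^{-\beta g_0(\alpha_x,\alpha_{x+1})}+e^{-\beta g_1(f)})\tt^R_x)$; the $\alpha$-independent pieces again disappear from the quotient and leave the continuous-case $I(\alpha)$.

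For almost sure convergence of $\tilde b$, it suffices to show that for every $\alpha\neq b$ with $\alpha_1=b_1$ one has $I(\alpha)-I(b)>0$ once $R$ is large enough. The strong law of large numbers applied to the $R$ i.i.d.\ trajectories together with Lemma \ref{loiLp} gives $L_x^{+,R}/R\to 1/\bar p_x$ (and $\tt_x^R/R\to e^{\beta g_0(b_x,b_{x+1})}/(r\bar p_x)$) almost surely. Reusing the algebraic identity of Proposition \ref{quipu1}, one obtains in the discrete case
\[ \frac{1}{R}(I(\alpha)-I(b))\xrightarrow[R\to\infty]{a.s.}\sum_{x=1}^{M-1}\frac{1}{\bar p_x}\,G_{\Delta g(b_x,b_{x+1})}\!\bigl(\Delta g(\alpha_x,\alpha_{x+1})\bigr), \]
and the analogous expression with $F$ in place of $G$ in the continuous case. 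Each summand is nonnegative and vanishes exactly when $g_0(\alpha_x,\alpha_{x+1})=g_0(b_x,b_{x+1})$. The boundary condition $\alpha_1=b_1$ combined with the injectivity of $g_0(b_1,\cdot)$ (Hypothesis \ref{Hyp1}) then forces $\alpha_2=b_2$; iterating using the injectivity in the first argument propagates the equality to the whole chain, so $\alpha\neq b$ implies a strictly positive limit. The $-\log\Pp(b=\alpha)$ contribution is $O(1)$ and thus irrelevant on the scale $R$. Since $\{A,T,C,G\}^M$ is finite, a single almost sure event ensures $\tilde b=b$ for all large $R$.

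The main conceptual point, and the only real obstacle, is the propagation step: site-by-site, the limit only identifies the \emph{pair} $(\alpha_x,\alpha_{x+1})$ up to the level sets of $g_0$, so without the known first base one could not conclude. The hypothesis that $g_0$ is injective in each variable, anchored at $b_1$, is exactly what is needed to recover the entire sequence from the pairwise equalities, turning the site-by-site identifiability of Proposition \ref{quipu1} into a global uniqueness statement.
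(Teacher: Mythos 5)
Your proposal is correct and follows essentially the same route as the paper: Bayes' formula with the $\alpha$-independent trajectory factors canceling to give the quotient $e^{-I(\alpha)}/\sum_{\bar\alpha}e^{-I(\bar\alpha)}$, then the LLN applied to $L^{+,R}_x$ and $S^R_x$ via Lemma \ref{loiLp} to identify the limit of $\frac{1}{R}\left(I(\alpha)-I(b)\right)$ as a sum of nonnegative terms vanishing only when $g_0(\alpha_x,\alpha_{x+1})=g_0(b_x,b_{x+1})$, followed by the propagation anchored at $\alpha_1=b_1$ under Hypothesis (\ref{Hyp1}). One small wording fix: the forward propagation uses at each site the injectivity of $g_0(b_x,\cdot)$, i.e. in the second variable (not the first, as you wrote), which Hypothesis (\ref{Hyp1}) also provides.
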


\begin{proof}
We only give the proof for the continuous case, the discrete one uses the same ideas.
 For a realization of $Y^R=\left(X^{(1)},\cdots,\ X^{(R)}, \right.$ $\left. \ T^{(1)},\cdots,\ T^{(R)}\right)$, Bayes Lemma gives :
\begin{align*}
 \Pp&\left(b=\alpha|Y^R=y\right)=\Pp\left(b=\alpha\bigg|\bigcap_{l=1}^R\left\{X^{(l)}=x^{(l)},\ T^{(l)}=t^{(l)}\right\}\right)\\
 &=\frac{\Pp\left(\bigcap_{l=1}^R\left\{X^{(l)}=x^{(l)},\ T^{(l)}=t^{(l)}\right\}\bigg|b=\alpha\right)\Pp(b=\alpha)}{\Pp\left(\bigcap_{l=1}^R\left\{X^{(l)}=x^{(l)},\ T^{(l)}=t^{(l)}\right\}\right)}.\\
\end{align*}
Notice that we still use $\Pp$ to denote a probability density.
When $X^{(l)}_i=x^{(l)}_i$, and the environment $\alpha$ are given, $T^{(l)}_i$ is an exponential variable, independent of the other $(X^{(k)},\ T^{(k)})$, and of parameter $r\left(e^{-\beta g_0\left(\alpha_{x_i^{(l)}},\alpha_{x_i^{(l)}+1}\right)}+e^{-\beta g_1(f)}\right)$ if $x^{(l)}_i\in\{2,\cdots,M-1\}$ or $re^{-\beta g_0(\alpha_{1},\alpha_{2})}$ if $x^{(l)}_i=1$. Thus,
\begin{align*}
&\Pp\left(\bigcap_{l=1}^R\left\{X^{(l)}=x^{(l)},\ T^{(l)}=t^{(l)}\right\}\bigg|b=\alpha\right)\\
=&\Pp\left(\bigcap_{l=1}^RX^{(l)}=x^{(l)}\bigg|b=\alpha\right)\prod_{l=1}^{R}\prod_{i=1}^{\tau_M^{(l)}-1}\Pp(T_i^{(l)}=t^{(l)}_i|b=\alpha,\ X^{(l)}_i=x^{(l)}_i)\\
=&\Pp\left(\bigcap_{l=1}^RX^{(l)}=x^{(l)}\bigg|b=\alpha\right)(re^{-\beta g_0(\alpha_1,\alpha_{2})})^{l^{R}_{1}}e^{-s^R_1re^{-\beta g_0(\alpha_1,\alpha_{2})}}\\
\times&\prod_{x=2}^{M-1}(r(e^{-\beta g_0(\alpha_x,\alpha_{x+1})}+e^{-\beta g_1(f)}))^{l^{R}_{x}}e^{-s^R_xr(e^{-\beta g_0(\alpha_x,\alpha_{x+1})}+e^{-\beta g_1(f)})}
\end{align*}
$$
\text{where}\ l^{R}_{i}=\sum_{l=1}^R\sum_{k=1}^{\tau_M^{(l)}-1}\un_{x^{(l)}_k=i}\ \text{ and }\ s^R_i=\sum_{l=1}^R\sum_{k=1}^{\tau_M^{(l)}-1}t^{(l)}_{x^{(l)}_k}\un_{x^{(l)}_k=i}.
$$
Moreover
\begin{align*}
&\Pp\left(\bigcap_{l=1}^RX^{(l)}=x^{(l)}\bigg|b=\alpha\right)\\
=&\prod_{x=2}^{M-1}\left(\frac{e^{-\beta g_0(\alpha_x,\alpha_{x+1})}}{e^{-\beta g_0(\alpha_x,\alpha_{x+1})}+e^{-\beta g_1(f)}}\right)^{l^{+,R}_x}\left(\frac{e^{-\beta g_1(f)}}{e^{-\beta g_0(\alpha_x,\alpha_{x+1})}+e^{-\beta g_1(f)}}\right)^{l^{+,R}_{x-1}-1}\\
 =&\prod_{x=2}^{M-1}\frac{e^{-l^{+,R}_x\beta g_0(\alpha_x,\alpha_{x+1})-(l^{+,R}_{x-1}-1)\beta g_1(f)}}{\left(e^{-\beta g_0(\alpha_x,\alpha_{x+1})}+e^{-\beta g_1(f)}\right)^{l^R_x}}
\end{align*}
where
$$
l^{+,R}_{i}=\sum_{l=1}^R\sum_{k=1}^{\tau_M^{(l)}}\un_{x^{(l)}_k=i,\ x^{(l)}_{k+1}=i+1}.$$
Then we have the following equality
\begin{align*}
&\Pp\left(\bigcap_{l=1}^R\left\{X^{(l)}=x^{(l)},\ T^{(l)}=t^{(l)}\right\}\bigg|b=\alpha\right)\\
=&\prod_{x=2}^{M-1}r^{l^{R}_{x}}e^{-s^R_xr(e^{-\beta g_0(\alpha_x,\alpha_{x+1})}+e^{-\beta g_1(f)})-l^{+,R}_x\beta g_0(\alpha_x,\alpha_{x+1})-(l^{+,R}_{x-1}-1)\beta g_1(f)} \\
&\times r^{l^R_1}e^{-s^R_1re^{-\beta g_0(\alpha_1,\alpha_{2})}-l^{+,R}_1\beta g_0(\alpha_1,\alpha_{2})}.
\end{align*}
Assembling the different expressions we get the formula for $\Pp(b=\alpha|Y^R)$.

We now prove the convergence of the maximum likelihood estimator. According to Lemma \ref{loiLp}, the LLN and the LIL, for any $x\in\{1,\cdots,M-1\}$, $\Pp^b$ almost surely for $R$ large enough
\begin{align*}
L^{+,R}_x=\frac{R}{\bar{p}_x(b)}+\epsilon_x(R)\ \text{ and}\ S^R_x&=\frac{Re^{\beta g_0(b_x,b_{x+1})}}{r\bar{p}_x(b)}+\epsilon_x(R). 
\end{align*}
Then for any $\alpha\in\{A,T,C,G\}^M$, $\Pp$-almost surely the information $I(\alpha)$ is equivalent to 
\begin{align}
I(\alpha)=R\sum_{x=1}^{M-1}(\beta g_0(\alpha_x,\alpha_{x+1})+e^{-\beta\left(g_0(\alpha_x,\alpha_{x+1})-g_0(b_x,b_{x+1})\right)})\frac{1}{\bar{p}_x(b)}+\epsilon_x(R). \label{eqI}
\end{align}
As for any real number $c$, the function $x\rightarrow x+e^{-x+c}$ is minimal iff $x=c$, the sum is minimal if and only if for each $x\in\{1,\cdots,M-1\}$, $$g_0(\alpha_x,\alpha_{x+1})=g_0(b_x,b_{x+1}).$$ So by Hypothesis \eqref{Hyp1} and the equality $\alpha_1=b_1$, $\Pp$ almost surely for $R$ large enough, $I(\alpha)$ is minimal iff $\alpha=b$.
\end{proof}

\subsection{Control of the estimation for the continuous time case}
In this part, 
we show that the probability to make at least one mistake using the global estimator $\tilde{b}$ decreases exponentially. 
\begin{Cor}\label{cornberr} Let $n_e$ be the number of wrong predictions, then $\Pp$-almost surely,  
\begin{align*} 
\lim_{R \rightarrow + \infty }\frac{-\log \Pp\left(n_e\geq 1|Z^R, b_1
\right)}{R}&\geq \Delta{F}^-.
\end{align*}
\end{Cor}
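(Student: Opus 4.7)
The strategy is to combine the exact formula for $\Pp(b = \alpha \mid Z^R, b_1)$ from Theorem \ref{estiglob} with the asymptotic expansion \eqref{eqI} of the global information $I$, and to exploit Hypothesis \eqref{Hyp1} to force a gap of order $R\,\Delta F^-$ between $I(b)$ and $I(\alpha)$ for every candidate word $\alpha \neq b$.

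\emph{Step 1 (reduction to a ratio sum).} By Theorem \ref{estiglob}, conditionally on $b_1$,
\begin{align*}
\Pp\left(n_e \geq 1 \mid Z^R, b_1\right) = \sum_{\alpha \neq \tilde{b}} \frac{e^{-I(\alpha)}}{\sum_{\bar{\alpha}} e^{-I(\bar{\alpha})}} \leq \sum_{\alpha \neq \tilde{b}} e^{-(I(\alpha) - I(\tilde{b}))},
\end{align*}
where each sum ranges over $\alpha \in \{A,T,C,G\}^M$ with $\alpha_1 = b_1$. Theorem \ref{estiglob} also yields $\tilde{b} = b$ $\Pp$-almost surely for $R$ large enough, so it suffices to lower bound $I(\alpha) - I(b)$ uniformly over those $\alpha \neq b$.

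\emph{Step 2 (gap from local injectivity).} Using \eqref{eqI} together with the identity $\beta w + e^{-\beta w} - 1 = F(-w)$ applied to $w_x = g_0(\alpha_x,\alpha_{x+1}) - g_0(b_x,b_{x+1})$, we obtain
\begin{align*}
I(\alpha) - I(b) = R \sum_{x=1}^{M-1} \frac{F\bigl(g_0(b_x,b_{x+1}) - g_0(\alpha_x,\alpha_{x+1})\bigr)}{\bar{p}_x(b)} + \epsilon(R),
\end{align*}
with $\epsilon(R) = O(\sqrt{R \log \log R})$ inherited from the LIL step used in Theorem \ref{estiglob}, and $F \geq 0$ throughout. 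For $\alpha \neq b$ with $\alpha_1 = b_1$, let $x_0 \geq 2$ be the smallest index such that $\alpha_{x_0} \neq b_{x_0}$; then $\alpha_{x_0 - 1} = b_{x_0 - 1}$, and Hypothesis \eqref{Hyp1} (injectivity of $g_0(b_{x_0-1},\cdot)$) guarantees $g_0(b_{x_0-1},\alpha_{x_0}) \neq g_0(b_{x_0-1},b_{x_0})$. The corresponding summand is then at least $\Delta F^-(b_{x_0-1}) \geq \Delta F^-$; since every other summand is nonnegative and $\bar{p}_x(b) \leq 1$, we conclude
\begin{align*}
I(\alpha) - I(b) \geq R\,\Delta F^- + \epsilon(R).
\end{align*}

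\emph{Step 3 (conclusion).} Combining Steps 1 and 2, and bounding the at most $4^{M-1}$ summands uniformly,
\begin{align*}
\Pp\left(n_e \geq 1 \mid Z^R, b_1\right) \leq 4^{M-1}\, e^{-R\,\Delta F^- + \epsilon(R)}.
\end{align*}
Taking $-\log$, dividing by $R$ and letting $R \to \infty$ gives the announced lower bound $\Delta F^-$ on the exponential rate, since $M$ is fixed and $\epsilon(R)/R \to 0$ $\Pp$-almost surely. The main obstacle is bookkeeping: one must check that the LIL error $\epsilon(R)$ is uniform across the finitely many competing words $\alpha$, which follows from the same estimates as in Theorem \ref{estiglob}; the conceptual heart of the argument is the selection of the first discrepancy index $x_0$, which together with \eqref{Hyp1} ensures that no cancellation can occur in the information gap.
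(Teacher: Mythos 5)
Your proposal is correct and follows essentially the same route as the paper: both reduce $\Pp(n_e\geq 1\mid Z^R,b_1)$ to the sum $\sum_{\alpha\neq\tilde b}e^{-(I(\alpha)-I(\tilde b))}$, rewrite the information gap via \eqref{eqI} as $R\sum_x F\bigl(g_0(\cdot)-g_0(\cdot)\bigr)/\bar p_x$ up to an $O(\sqrt{R\log\log R})$ error, and isolate the first discrepancy index (your $x_0$, the paper's $y$) where Hypothesis \eqref{Hyp1} and $1/\bar p_x\geq 1$ give a contribution of at least $R\,\Delta F^-$, before absorbing the $4^{M}$-fold count of competitors. The only differences are cosmetic (working with $b$ versus $\tilde b$ via the a.s.\ convergence, and $4^{M-1}$ versus $4^{M}$), so no changes are needed.
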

\begin{proof}
From the first part of Theorem \ref{estiglob} we know that 
\begin{equation*}
 \Pp\left(n_e \geq 1 \big|Z^R, b_1\right)  =1-  \Pp\left(b=\tilde{b}\big|Z^R, b_1\right)= 1-\frac{1}{1+\sum_{{\alpha}\neq \tilde{b}}e^{-(I(\alpha)-I(\tilde{b}))}}.
\end{equation*}
The second part of Theorem \ref{estiglob} together with Equation \eqref{eqI} give that $\Pp$-almost surely for $R$ large enough
\begin{align*}
I(\alpha)-I(\tilde{b}) 
=\sum_{x=1}^{M-1}L^{+,R}_xF\big( g_0(\tilde{b}_x,\tilde{b}_{x+1})- g_0(\alpha_x,\alpha_{x+1}\big))+\epsilon_x(R),
\end{align*}
recall that, for $u \in \R$, $F(u)=e^{\beta u}-1-\beta u$ and $\epsilon_x(R)$ is defined in Proposition \ref{quipu1}.
Notice that we also have 
\begin{align*}
I(\alpha)-I(\tilde{b}) 
&=\sum_{x=1}^{M-1}\frac{R}{\bar{p}_x(\tilde{b})}F\big( g_0(\tilde{b}_x,\tilde{b}_{x+1})- g_0(\alpha_x,\alpha_{x+1}\big))+\epsilon_x(R) \\
&\geq R\sum_{x=1}^{M-1}F\big(g_0(\tilde{b}_x,\tilde{b}_{x+1})- g_0(\alpha_x,\alpha_{x+1}\big))+\epsilon_x(R).
\end{align*}
For $\alpha\neq \tilde{b}$, denote by $\alpha_y$ the first site such that $\alpha_y\neq \tilde{b}_y$, obviously $y\geq 2$ and therefore,
\begin{align*}
I(\alpha)-I(\tilde{b}) 
&\geq R\cdot F\left(g_0(\tilde{b}_{y-1},\tilde{b}_{y})- g_0(\tilde{b}_{y-1},\alpha_{y})\right)+O(\sqrt{R\log\log R})\\
&\geq R\cdot \Delta F^-+O(\sqrt{R\log\log R})
\end{align*} so finally we get 
 \begin{align*}
\sum_{{\alpha}\neq \tilde{b}}e^{-(I(\alpha)-I(\tilde{b}))} \leq 4^Me^{-R\cdot \Delta{F}^-+O(\sqrt{R\log\log R})} 
\end{align*}
which concludes the proof.
\end{proof}
Now let us define $\tilde{n}_e$ the number of non successive errors;  by non successive, we mean that two errors are separated by at least one good prediction. The probability to make more than $h$ non successive errors exponentially decreases: 
\begin{Cor}\label{cornberr}
Let $h\in\N$, $\Pp$-almost surely,  
\begin{align*} 
\lim_{R \rightarrow + \infty }\frac{-\log \Pp\left(\tilde{n}_e\geq h|Z^R, b_1
\right)}{R}&\geq h\cdot\Delta{F}^-.
\end{align*}
\end{Cor}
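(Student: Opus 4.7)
The plan is to adapt the proof of the previous corollary, the only new ingredient being a combinatorial bookkeeping argument that each of the $h$ non-successive errors contributes an independent $\Delta F^-$ to $I(\alpha)-I(\tilde{b})$.

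First I would write, as in the previous corollary,
\begin{equation*}
\Pp\left(\tilde{n}_e\geq h\,\big|\,Z^R,b_1\right)
=\frac{\sum_{\alpha:\tilde{n}_e(\alpha)\geq h}e^{-I(\alpha)}}{\sum_{\bar\alpha}e^{-I(\bar\alpha)}}
\leq \sum_{\alpha:\tilde{n}_e(\alpha)\geq h}e^{-(I(\alpha)-I(\tilde{b}))},
\end{equation*}
and reuse the lower bound established in the proof of Corollary~4 (via Theorem~\ref{estiglob} and Lemma~\ref{loiLp}): $\Pp$-almost surely, for $R$ large enough,
\begin{equation*}
I(\alpha)-I(\tilde{b})\ \geq\ R\sum_{x=1}^{M-1}F\!\left(g_0(\tilde{b}_x,\tilde{b}_{x+1})-g_0(\alpha_x,\alpha_{x+1})\right)\,+\,O(\sqrt{R\log\log R}).
\end{equation*}

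The main new step is the following observation. Fix $\alpha$ with $\tilde{n}_e(\alpha)\geq h$, and let $y_1<y_2<\dots<y_h$ be $h$ non-successive error sites, so that $y_{i+1}\geq y_i+2$ and every $y_i-1$ is either the site $1$ (excluded since $\alpha_1=b_1=\tilde{b}_1$ forces $y_1\geq 2$) or a correct prediction. Thus $\alpha_{y_i-1}=\tilde{b}_{y_i-1}$ while $\alpha_{y_i}\neq\tilde{b}_{y_i}$, and by Hypothesis~\eqref{Hyp1},
\begin{equation*}
F\!\left(g_0(\tilde{b}_{y_i-1},\tilde{b}_{y_i})-g_0(\tilde{b}_{y_i-1},\alpha_{y_i})\right)\ \geq\ \Delta F^-.
\end{equation*}
Since the indices $x=y_i-1$ are pairwise distinct (the $y_i$ are strictly increasing), these $h$ contributions appear in distinct terms of the sum $\sum_{x=1}^{M-1}F(\cdot)$, hence
\begin{equation*}
I(\alpha)-I(\tilde b)\ \geq\ R\cdot h\cdot \Delta F^-\,+\,O(\sqrt{R\log\log R}).
\end{equation*}

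Finally, bounding the number of $\alpha\in\{A,T,C,G\}^M$ with at least $h$ non-successive errors by $4^M$ yields
\begin{equation*}
\Pp\!\left(\tilde{n}_e\geq h\,\big|\,Z^R,b_1\right)\ \leq\ 4^M\,e^{-R\cdot h\cdot\Delta F^-+O(\sqrt{R\log\log R})},
\end{equation*}
and taking $-\log$, dividing by $R$ and passing to the limit gives the claim. The only delicate point is the non-successivity bookkeeping: one must be careful that each error $y_i$ picks up its own term $x=y_i-1$ with $\alpha_x=\tilde{b}_x$ and $\alpha_{x+1}\neq \tilde{b}_{x+1}$, which is precisely what the ``separated by at least one good prediction'' assumption guarantees; consecutive errors would instead share the perturbed $g_0$ pair and would not yield additive $\Delta F^-$ contributions.
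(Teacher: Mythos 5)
Your argument is correct and follows essentially the same route as the paper: bound the conditional probability by $\sum_{\alpha}e^{-(I(\alpha)-I(\tilde b))}$ over chains with at least $h$ non-successive errors, extract from each such chain $h$ distinct sites $y$ with $\alpha_{y-1}=\tilde b_{y-1}$ and $\alpha_y\neq\tilde b_y$, each contributing at least $\Delta F^-$ (using $F\geq 0$ for the remaining terms), and conclude via the $4^M$ count. The only slight imprecision is that arbitrarily chosen non-successive error sites need not have correct predecessors; one should take each $y_i$ to be the first error of its error block (which the separation hypothesis guarantees exist and are distinct), exactly the points the paper singles out.
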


\begin{proof} Like in the previous section we easily compute
\begin{align*}
 \Pp\left(\tilde{n}_e \geq h \big|Z^R, b_1\right)& =\Pp \left(b \in \mathcal{A}_h \big|Z^R, b_1\right)\\ 
 & = \left(1+\frac{\sum_{{\bar{\alpha}}\in\overline{\mathcal{A}_h}}e^{-I(\bar{\alpha})}}{\sum_{{\alpha}\in\mathcal{A}_h}e^{-I(\alpha)}}\right)^{-1}.
\end{align*}
where $\mathcal{A}_h$ is the set of the chains $\alpha$ which are different from $\tilde{b}$ in at least $h$ non successive sites and $\overline{\mathcal{A}_h}$ is the complementary set. 
As $\tilde{b}\in\overline{\mathcal{A}_h}$,
\begin{equation*}
 \Pp\left(\tilde{n}_e \geq h \big|Z^R, b_1\right)\leq \left(1+\frac{e^{-I(\tilde{b})}}{\sum_{{\alpha}\in\mathcal{A}_h}e^{-I(\alpha)}}\right)^{-1},
\end{equation*}
thus, as before, we just have to study the quantities $I(\alpha)-I(\tilde{b})$ where $\alpha\in\mathcal{A}_h$.

Here the only important contribution for a given chain $\alpha$ comes from the sites $y$ such that $\alpha_y$ is different from $\tilde{b}_y$ but $\alpha_{y-1}=\tilde{b}_{y-1}$. As every chain of $\mathcal{A}_h$ has at least $h$ such points, we obtain in the same way as before,
\begin{align*}
\sum_{{\alpha}\in\mathcal{A}_h}e^{-(I(\alpha)-I(\tilde{b}))} \leq 4^Me^{-Rh\Delta F^-+O(\sqrt{R\log \log R})}, 
\end{align*}
it is then easy to obtain the result of the corollary.
\end{proof}

This corollary shows that we have very few chances to make several mistakes at distant bases of the molecule, however notice that we can not replace $\tilde{n}_e$ by ${n}_e$, indeed the probability to be wrong at $h$ successive sites does not decrease exponentially in $h$. We actually note that if we get a mistake in one site then there is a great probability to make mistakes on the following bases.

The discrete time case leads to very similar results, in fact the main difference is that $\Delta F^-$ is replaced by $\Delta G^-$.

%


\section{Possible improvements of the method\label{sec3}}

In this paragraph we use the results of the previous sections to propose two simple extensions which improve the prediction. In both of them the idea is to adapt the force $f$ to the context.

\subsection{ Forces depending on the coordinate of a site}
In this section, we mainly  discuss about $\frac{1}{\bar{p}_x}$ which appears in the expression  of $1/R_c(x)$. As seen before, $\frac{1}{\bar{p}_x}$  can be large but it depends on the sequence $g_0$ and the force at site $x$. We recall that
 \begin{align*}
\frac{1}{\bar{p}_x} 
=& \sum_{l=x+1}^{M-1}\exp\left(-\beta\left\{(l-x)g_1(f)-\sum_{k=x+1}^lg_0(b_k,b_{k+1})\right\}  \right).
\end{align*}
For example when the force $f$ is not too large, some valleys, that is to say portions of the sequence $b$ such that $\sum_{k=x+1}^lg_0(b_k,b_{k+1})-(l-x)g_1(f)$ is large for a given $l$, can appear.
So the quality of the prediction is good only in some specific regions of the molecule (the decrease of the probability to be wrong behaves like $e^{-\textrm{const} R e^{\beta M_x}}$, where $M_x$ is given in (\ref{Mx})).

When the above condition does not appear, the force can be modified in order to slow down locally the system\footnote{According to the physicists, it is possible. }. 
Assume that we are interested in a specific region centered at the coordinate $y$, $[y-A,y+A]$ for some $A>0$, where the $(L_{y+x}^++L_{y+x}^-, x\in [-A,A])$ or the $(S_{y+x},\  x\in [-A,A])$  are small. Then we can take for $x\in [-A,A]$,
\begin{align}
g_1(f_{y+x})=C(A-x),
\end{align}
for some small constant $C>0$, we get:
 \begin{align*}
\frac{1}{\bar{p}_{y+x}} 
&\geq\exp\left(-\beta\left\{\frac{C}{2}(A-x)(A-x-1) -\sum_{k=y+x+1}^{A+y}g_0(b_k,b_{k+1})\right\}  \right)
\end{align*}
especially for $x=0$,
 \begin{align*}
\frac{1}{\bar{p}_{y}}&\geq\exp\left(-\beta\left\{\frac{C}{2}A(A-1)-\sum_{k=y+1}^{A+y}g_0(b_k,b_{k+1})\right\}  \right)
\end{align*}
Then if $\Ep(g_0(b_k,b_{k+1}))> {C}(A-1)$ and $A$ is large enough, $\frac{1}{\bar{p}_{y}}$ will be quite large too.
Once again that will work if the region we are looking at is quite far from the end of the molecule, that is to say, $A$ is large. On the other case what could be a good idea is to unzip the molecule from the end.
We now move to another possible improvement.

\subsection{ The energy point of view: forces depending on the values of the environment}

In this paragraph we do not try to find directly the sequence of bases but the associated binding energies. We denote $g_0(x)$ for $g_0(b_x,b_{x+1})$ and we assume that there are $K$ distinct values for $g_0(x)$, typically for a DNA molecule they are given by Table \ref{tab1}. Note that the random variables $(g_0(x),x)$ are not independent, and that the dependence is also given by Table \ref{tab1}. For example, the energy 1.06 can only be followed by 1.78, 1.55, 2.52 or 2.22. 
We will keep the notation $\Pp^{b}$ when we work at fixed energy. 
\emph{To simplify the computations we also assume that the sequences $g_0$ are equiprobable.}

First let us introduce some new notations. We will denote by $\mu_1,\mu_2,\cdots,\mu_K$, the possible values of $g_0(.)$,  ordered in such a way that $\mu_i>\mu_{i+1}$ for all $i$. We also assume that the force $f$ 
can take $K+1$ decreasing values $\{f_1,f_2,\cdots,f_{K-1},f_K,f_{K+1}=0\}$ such that $g_1(.)$ takes  $K+1$ distinct values denoted $\{r_1,r_2,\cdots,r_{K-1},r_K,r_{K+1}=0\}$ and satisfying 
 \begin{align*}
 &  r_1>r_2>\cdots>r_{K-1}, \\
& \mu_1-r_1<0,\ \mu_1-r_2>0,\ \forall i>1\ \mu_i-r_2<0,  \\
&  \mu_2-r_2<0,\ \mu_2-r_3>0,\ \forall i>2\ \mu_i-r_3<0, \\
 &   \cdots  \\
&  \mu_{K}-r_{K}<0,\  \mu_{K}-r_{K+1}=\mu_{K}>0.
 \end{align*}

Let us define $q^i_m:=(1+e^{\beta(\mu_m-r_i)})^{-1}$.
This is the probability to go on the right if the force $f_i$ is applied and if the value of the environment is equal to $\mu_m$. Notice that if $f_1$ is applied then for all $x\leq M$, $p_x:=\left(1+\exp(\beta (g_0(x)-r_1)) \right)^{-1} \geq q_{1}^1 >1/2$. We denote $\Gamma_1:=\{x \leq M, g_0(x)=\mu_1 \}$. Then if $f_2$ is applied, for all $x \leq M,\ x \notin \Gamma_1$, $p_x \geq q_2^2$. We therefore denote $\Gamma_i:=\{x \leq M, g_0(x)=\mu_i \}$ for $i\in\{2,\dots,K\}$ and we get a partition $\{\Gamma_1,\Gamma_2,\cdots,\Gamma_K\}$ of $\{1,\cdots,M\}$.
The idea is then to consider a certain number of random walks for each values taken by the force.  We denote by $R_j$ the number of random walks we consider for the force with value $f_j$. \\ From now on we will only focus on the discrete time case, indeed it is the one where the gain is the most important, however what we suggest can be applied to the continuous time model as well. We introduce the information at site $x$ when the force $f_j$ is applied:
\begin{align}
& i^j_x(m):=L_x^{+,R_j}\log\left(1+e^{\beta (\mu_m-r_j})\right)+L_x^{-,R_j}\log\left(1+e^{-\beta (\mu_m-r_j)}\right) \label{info3} 
\end{align}
and the relative information at site $x$
\begin{align}
i^j_x(m,l):=i^j_x(m)-i^j_x(l).
\end{align}
We also define the function $H_a:\R \rightarrow \R_+$, 
$$ H_a(u):=\log(1+e^{\beta u})+\exp(\beta  a)\log(1+e^{-\beta u}).$$ 
\begin{proposition} Let $k \leq K$, assume that the forces $f_k$ and then $f_{k+1}$ are applied (everywhere) then, for any $x$, any sequence $g_0^x$ and any estimator $ \hat{g}_0(x)$,
\begin{align*} 
 & \Pp \left(\hat{g}_0(x)=\mu_k,\ {g_0}(x)\neq \mu_{k} |X^{R_k},X^{R_{k+1}},g_0^x \right) \\ 
 & = \left(1+\left(\sum_{m=1,m\neq k}^K \exp\left(-i^{k}_x(m,k)-i^{k+1}_x(m,k) \right) \right)^{-1} \right)^{-1}\un_{\hat{g}_0(x)=\mu_k}. 
 \end{align*}
Let us define the following estimator:
\begin{align}
 \hat{g}_0(x)= \mu_{\inf \left\{k>0, \ \frac{L_x^{-,R_k}}{L_x^{+,R_k}}<1,\ \frac{L_x^{-,R_{k+1}}}{L_x^{+,R_{k+1}}}>1  \right\}}, \label{estm2}
\end{align} 
then $\Pp$-almost surely,
\begin{align*} 
\frac{1}{R_c^k(x)} & := -\lim_{R_k=R_{k+1}=R \rightarrow \infty } \frac{1}{R} \log\Pp\left(\hat{g}_0(x)=\mu_k,\ {g_0}(x)\neq \mu_{k} |X^{R_k},X^{R_{k+1}},\ g_0^x \right)\\  
& \geq \frac{H^{(k)}}{\bar{p}_x^k}   + \frac{H^{(k+1)}}{\bar{p}_x^{k+1}},
\end{align*}
where
\begin{align} 
H^{(k)}&:=\min_{l \in \{k-1,k+1\}}H_{\mu_k-r_k}(\mu_l-r_k)-H_{\mu_k-r_k}(\mu_k-r_k), \nonumber \\
H^{(k+1)}&:= \min_{l \in \{k-1,k+1\}}H_{\mu_k-r_{k+1}}(\mu_l-r_{k+1})-H_{\mu_k-r_{k+1}}(\mu_k-r_{k+1}), \nonumber \textrm{ and}\\
\frac{1}{\bar{p}_{x}^l} &=\frac{1}{\bar{p}_{x}^l(g_0^x,f_l)}:=\sum_{z=x+1}^{M-1}\exp\left(\sum_{y=x+1}^z g_0(y)-r_l\right)+1.  \label{barp2}
 \end{align} 
\end{proposition}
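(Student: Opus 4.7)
The strategy mirrors the one used for Proposition~\ref{quipu1}, with the two-force setting adding a second independent block of trajectories that has to be combined with the first.

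The plan is to start from Bayes' formula applied to the joint observation $(X^{R_k},X^{R_{k+1}})$. Conditionally on the environment, the $R_k$ walks at force $f_k$ and the $R_{k+1}$ walks at force $f_{k+1}$ are independent, so the likelihood factorises as a product of two likelihoods. On each factor I would apply Lemma~\ref{loiLp}: the binomial combinatorial weights do not depend on the energy value $g_0(x)$ and cancel upon normalisation, while the factors $p_x^{L_x^{+,R_j}}(1-p_x)^{L_x^{-,R_j}}$ reassemble exactly into $\exp(-i^j_x(m))$ at site $x$ (all the other sites contribute identical factors in numerator and denominator because $g_0^x$ is fixed). Together with the uniform prior on the sequences, this yields
\[
\Pp\!\left(g_0(x)=\mu_k\,\big|\,X^{R_k},X^{R_{k+1}},g_0^x\right)
= \Bigl(1+\sum_{m\neq k}e^{-i^k_x(m,k)-i^{k+1}_x(m,k)}\Bigr)^{-1}.
\]
Since $\hat g_0(x)$ is measurable with respect to $(X^{R_k},X^{R_{k+1}})$, the event $\{\hat g_0(x)=\mu_k,\,g_0(x)\neq\mu_k\}$ is the product of $\un_{\hat g_0(x)=\mu_k}$ with the complementary posterior, which produces exactly the first claimed formula.

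To obtain the asymptotic rate I would apply the LLN and the LIL (as in the proof of Proposition~\ref{quipu1}) to $L_x^{+,R_j}$ and $L_x^{-,R_j}$ separately for $j=k$ and $j=k+1$, using the means and variances supplied by Lemma~\ref{loiLp}. Substituting in the Bayes expression gives, $\Pp$-a.s.\ for $R=R_k=R_{k+1}$ large,
\[
\frac{-\log\Pp(g_0(x)\neq\mu_k\,|\,X^{R_k},X^{R_{k+1}},g_0^x)}{R}
= \min_{m\neq k}\sum_{j\in\{k,k+1\}}\frac{H_{\mu_k-r_j}(\mu_m-r_j)-H_{\mu_k-r_j}(\mu_k-r_j)}{\bar p_x^j}+o(1),
\]
since the cross-terms from the two independent blocks are additive and the minimum over $m\neq k$ governs the exponential decay. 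A direct differentiation shows that $H_a$ is strictly convex and reaches its unique minimum at $u=a$, so each bracket is nonnegative and strictly positive for $m\neq k$. Convexity together with the ordering $r_1>\mu_1>r_2>\mu_2>\cdots$ localises the argmin at the two nearest alternatives $\mu_{k-1}$ and $\mu_{k+1}$, and one checks that taking the minimum separately in each of the two sums (rather than jointly) is only a lower bound; this produces the inequality with $H^{(k)}$ and $H^{(k+1)}$ as stated.

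The main obstacle is that the estimator in \eqref{estm2} is a first-match rule on the empirical ratios, not the posterior mode. To connect the two I would argue that the Bayes error rate is a lower bound on the rate of any estimator (so the displayed formula is an upper bound for the best error exponent); conversely, a direct Cramér-type analysis of the two events $\{L_x^{-,R_k}/L_x^{+,R_k}<1\}$ and $\{L_x^{-,R_{k+1}}/L_x^{+,R_{k+1}}>1\}$ under $\Pp^{g_0}$ with true value $\mu_l$ gives exactly a rate of the form $H^{(k)}/\bar p_x^k+H^{(k+1)}/\bar p_x^{k+1}$ when $l\in\{k-1,k+1\}$, while any $l\notin\{k-1,k,k+1\}$ requires deviations of larger cost in both families simultaneously and therefore contributes a strictly larger exponent. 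Matching the two bounds and using independence of the two blocks of walks concludes the proof.
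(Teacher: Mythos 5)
Your first part is fine and matches the paper: Bayes' formula, the independence of the two blocks of walks given the environment, and the cancellation of the combinatorial weights and of the sites other than $x$ give the stated expression, the indicator $\un_{\hat g_0(x)=\mu_k}$ appearing simply because $\hat g_0(x)$ is measurable with respect to the trajectories.

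The second half contains a genuine gap. The quantity in the proposition is the \emph{conditional} (posterior) probability given $X^{R_k},X^{R_{k+1}},g_0^x$; the particular estimator \eqref{estm2} enters only through that indicator. Hence there is no ``obstacle'' of connecting a first-match rule to the posterior mode, no role for a Bayes-optimality comparison between estimators, and no need for a Cram\'er-type analysis of the unconditional misclassification events $\{L_x^{-,R_k}/L_x^{+,R_k}<1\}$, $\{L_x^{-,R_{k+1}}/L_x^{+,R_{k+1}}>1\}$ under a wrong environment: those arguments address a different quantity (the annealed error probability of the estimator), and your claim that such an analysis ``gives exactly a rate of the form $H^{(k)}/\bar p_x^k+H^{(k+1)}/\bar p_x^{k+1}$'' is not justified. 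What is actually needed, and what the paper does, is an almost sure consistency step: by Lemma \ref{loiLp} and the LLN, $L_x^{-,R_j}/L_x^{+,R_j}\to e^{\beta(g_0(x)-r_j)}$, so the first-match index in \eqref{estm2} converges to the unique $k$ with $g_0(x)-r_k<0<g_0(x)-r_{k+1}$, i.e.\ $\{\hat g_0(x)=\mu_k\}$ coincides for $R$ large with $\{g_0(x)=\mu_k\}$. This is precisely what legitimises your displayed expansion with the centering $H_{\mu_k-r_j}$: the LLN limit of $i^j_x(m,k)/R$ is $\frac{1}{\bar p_x^j}\bigl(H_{g_0(x)-r_j}(\mu_m-r_j)-H_{g_0(x)-r_j}(\mu_k-r_j)\bigr)$, which is positive for all $m\neq k$ only when the true value is $\mu_k$; on $\{g_0(x)\neq\mu_k\}$ the posterior probability of error tends to $1$, and it is the eventually vanishing indicator that makes the bound trivial there. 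With this consistency step inserted, your LLN/LIL computation, the additivity over the two independent blocks, the localisation of the minimiser at $l\in\{k-1,k+1\}$ by monotonicity of $H_a$ away from $a$, and the splitting of the joint minimum into $H^{(k)}/\bar p_x^k+H^{(k+1)}/\bar p_x^{k+1}$ reproduce the paper's argument; without it, the matching argument you propose does not establish the stated almost sure bound.
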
 

We first give a short proof of the result and then discuss about the improvement.
\begin{proof}
The first part of the proposition is, like before, easily deduced from Bayes formula. Thanks to Lemma \ref{loiLp} and the LLN, $\Pp^{b}$-almost surely
\begin{align*}
 \lim_{R_k \rightarrow + \infty} \frac{i^{k}_x(m,k)}{R_k} 
& = \frac{1}{\bar{p}_x^k}\left( H_{g_0(x)-r_k}(\mu_m-r_k)-H_{g_0(x)-r_k}(\mu_k-r_k) \right),
\end{align*}
and in the same way  $\Pp^{b}$-almost surely
\begin{align}
& \lim_{R_j=R\rightarrow + \infty,\ \forall 1\leq j \leq K} \inf \left\{k>0, \ \frac{L_x^{-,R_k}}{L_x^{+,R_k}}<1,\ \frac{L_x^{-,R_{k+1}}}{L_x^{+,R_{k+1}}}>1  \right\}  \nonumber \\
&=\inf \left\{k>0,\ g_0(x)-r_k<0,\ g_0(x)-r_{k+1}>0  \right\}. 
\end{align}
This implies the  $\Pp^{b}$-almost sure convergence of $\{ \hat{g}_0(x)=\mu_k \}$ to the event $\{g_0(x)=\mu_k\}$. Therefore as $H_{a}(.)$ gets its minimum in $a$, $\Pp^{b}$ almost surely on $\{ \hat{g}_0(x)=\mu_k \}$ for all $m \neq k$
\begin{align*}
\lim_{R_k \rightarrow + \infty} \frac{i^{k}_x(m,k)}{R_k} 
& = \frac{1}{\bar{p}_x^k} \left(H_{\mu_k-r_k}(\mu_m-r_k)- H_{\mu_k-r_k}(\mu_k-r_k) \right), \\
& =: \frac{1}{\bar{p}_x^k} \Delta H_k^{(k)}(m)>0.
\end{align*}
A similar analysis can be done for $i^{k+1}_x(m,k)$ so $\Pp^{b}$-almost surely  $$\lim_{R_{k+1} \rightarrow + \infty} \frac{i^{k+1}_x(m,k)}{R_{k+1}}=:\frac{1}{\bar{p}_x^{k+1}} \Delta H_k^{(k+1)}(m).$$ We get that $\Pp^{b}$-almost surely for $R_k=R_{k+1}=R$ large enough
\begin{align*} 
 & \log \Pp\left(\hat{g}_0(x)=\mu_k,\ {g_0}(x)\neq \mu_{k} |X^{R_k},X^{R_{k+1}},\ g_0^x \right) \\ 
&  \leq \log \left(\sum_{m=1,m\neq k}^K \exp\left(-\frac{R}{\bar{p}_x^k} \Delta H_k^{(k)}(m)-\frac{R}{\bar{p}_x^{k+1}} \Delta H_k^{(k+1)}(m)+o(R) \right) \right)  \\
 &\leq -\min_{m\neq k} \left\{\frac{R}{\bar{p}_x^k} \Delta H_k^{(k)}(m)+\frac{R}{\bar{p}_x^{k+1}} \Delta H_k^{(k+1)}(m) +o(R)\right\}.
 \end{align*}
The $o(R_k)$ is the negligible term that comes from the ILL (see the end of the proof of Proposition \ref{quipu1}) which is of order of $\sqrt{R \log \log R}$. This gives the desire result by dividing by $R$. 
\end{proof}
Here we avoid a bad situation seen in the first section (see (\ref{M10})): for large $\beta$
\begin{align*} 
H^{(k+1)}\approx \beta \exp(\beta(\mu_k-r_{k+1}))
\end{align*} 
which exponentially increases with $\beta$.
However we have to be careful with this method. In order to catch the small values of the energy, $f_k$ should be small and may slow down the system(see (\ref{time2r}) and Lemma \ref{loiLp}), indeed we have  
\begin{align*}
\Ep^b\left[\tau_M\right] =&R \sum_{x=1}^{M-2}\left(\frac{1}{\bar{p}_x^k}+\frac{1}{\bar{p}_{x-1}^k}-1\right)+ R \sum_{x=1}^{M-2}\left(\frac{1}{\bar{p}_x^{k+1}}+\frac{1}{\bar{p}_{x-1}^{k+1}}-1\right) \\
\geq & R\exp(\beta \max _x M_x^{k+1}),
\end{align*}
and $M_x^{k+1} := \max_{ x \leq l\leq M-2}\left\{\sum_{l=x+1}^lg_0(b_l,b_{l+1})-r_{k+1} \right\}$ is large if $k$ is close to $K$. \\
An alternative approach is first to apply a large force $f_1$ from $0$ to $x-1$ in order to reach quickly the region we are interested in, then to apply all the forces in $x$ and then, after $x+1$ to apply a small force (for example $f_{K}$) in order to slow down the system and stay focus on $x$. More precisely $f$ depends on the energie as before but it also depends on the site: 
\begin{align*}
f_i(z)=f_1\un_{1 \leq z \leq x-1}+f_i\un_{ z = x}+f_{K}\un_{ z \geq x+1}.
\end{align*}
We get the following $\Pp$-almost sure result
\begin{align} 
\frac{1}{R_c(x)} & := - \lim_{R_j=R\rightarrow + \infty,\ \forall 1\leq j \leq K} \frac{1}{R} \log \Pp\left(\hat{g}_0(x)\neq{g_0}(x)|(X^{R_i}, i \leq K), g_0^x \right) \nonumber \\  
& \geq \frac{1}{\bar{p}_x^{K}}(H^{\rightarrow}+H^{\leftarrow}), \label{theeq}  \\  
H^{\rightarrow}& := \max_{ k \leq K-1 } \min_{l \in \{k-1,k+1\}}(H_{\mu_l-r_k}(\mu_l-r_k) -H_{\mu_k-r_k}(\mu_k-r_k)), \nonumber\\
H^{\leftarrow}& :=\max_{ k \leq K-1} \min_{l \in \{k-1,k+1\}}(H_{\mu_l-r_k}(\mu_l-r_{k+1})-H_{\mu_k-r_{k+1}}(\mu_k-r_{k+1}))\nonumber.
\end{align}
The main interest in the above result comparing to the previous one is the fact that $\frac{1}{\bar{p}_x^{K}}$ is large but the time to reach $x$ is small. Indeed  $$\Ep^b\left[\tau_x\right]  \sim   R \times x\left(1+ \exp\left(\beta  \max_{ 1 \leq l\leq x-1}\left\{\sum_{j=1}^lg_0(j)-r_1 \right\} \right) \right) \leq 2R \times x.$$ Of course this also increases the time required to reach the end of the molecule, but we can imagine that the process can be stopped once the precision for the site $x$ is reached. 
The proof to get the above expression is very close to the previous one so we do not give any details.

A last remark, the prediction depends on the rest of the unknown sequence $g_0^x$ due to the presence of ${1}/{\bar{p}_x^{K}}$. We can imagine an extreme case where  the forces  $f_i(z)=f_1\un_{1 \leq z \leq x-1}+f_i\un_{ z = x}+f_{K+1}\un_{ z \geq x+1}$ from $i=1$ to $K$ are applied, which means that the molecule can not be split after the base $x$. In this case we would have $\Pp$-almost surely for any sequence $g_0(x)$,
\begin{align} 
\frac{1}{R_c(x)} & := -\lim_{R \rightarrow \infty } \frac{1}{R} \log\left(\Pp\left(\hat{g}_0(x)\neq{g_0}(x)|X^R,\ g_0^x, f_i(.), i \leq K \right) \right)\nonumber \\  
& \geq  (H^{\rightarrow}+H^{\leftarrow}) \exp(\mu_K \beta (M-x)). \label{theeq}  
\end{align}
so at least asymptotically we get a lower bound for $ {1}/{R_c(x)}$  which is independent of $g_0^x$ and exponentially increasing in $\beta$.

We conclude with a discussion about the link between the energie and the sequence of bases. First let us recall the table of the binding free energies for DNA at room temperature:
\begin{center}
\begin{tabular}{|l|l|l|l|l|}
\hline   $g_0$ & A & T & C & G   \\
\hline   A  &  1.78 & 1.55  & 2.52 & 2.22  \\
\hline
 T  &  1.06 & 1.78  & 2.28 & 2.54  \\
\hline
 C  &  2.54 & 2.22  & 3.14 & 3.85  \\
\hline
 G  &  2.28 & 2.52  & 3.90 & 3.14  \\
\hline
\end{tabular}
\end{center}
Notice that the largest free energies which correspond to the most stable links are on the bottom right end corner of the table, in fact the largest binding energy is obtained when a $G$ is followed by a $C$. Notice also that $g_0(G,G)=g_0(C,C)$ so we can not distinguish these two different links by looking only at the free energy. In the same way the lowest free energy is produced by bases $T$ and $A$ followed by the same letters, once again $g_0(A,A)=g_0(T,T)$. For the rest of the table we have the equality $g_0(W,S)=g_0(\bar{S},\bar{W})$, where $S$ is either a $C$ or a $G$ and $W$ a $A$ or a $T$, $\bar{S}$ (respectively $\bar{W}$) is the complementary of $S$ (respectively of $W$).

Moreover it is possible to reconstruct the DNA molecule from the compatible binding energies only if there is only one sequence of base pairs which corresponds to the sequence of energies (see Theorem \ref{estiglob}). This is not always the case, for example when the molecule repeats  the same scheme: the energy of $C-C-\cdots-C$  is equal to the energy of $G-G-\cdots-G$, in the same way $A-C-A-C-\cdots -A-C$ has the same energy than $G-T-G-T-\cdots -G-T$. Notice that if these highly improbable sequences are broken only once in the molecule then we turn back to a solvable case.

\vspace{0.5cm}
\noindent \textbf{Acknowledgments}
We would like to thank Nathanael Enriquez and the members of the ANR MEMEMO  who enable us to meet R\'emi Monasson. Also we would like to thank R\'emi Monasson and Simona Cocco for introducing the subject, sharing several discussions and for a kind invitation at the ENS. 

\bibliographystyle{plain} 

\begin{thebibliography}{}

\end{thebibliography}


\begin{thebibliography}{10}

\bibitem{AdeEnr}
O.~Adelman and N.~Enriquez.
\newblock Random walks in random environment: What a single trajectory tells.
\newblock {\em Israel J. Math.}, 142:205--220, 2004.

\bibitem{Pierre7}
P.~Andreoletti.
\newblock On the estimation of the potential of {S}inai's rwre.
\newblock {\em Braz. J. Probab. Stat.}, 25:121-144, 2011.

\bibitem{Monasson0}
V.~Baldazzi, S.~Cocco, E.~Marinari, and R.~Monasson.
\newblock Infering dna sequences from mechanical unzipping: an ideal-case
  study.
\newblock {\em Physical Review Letters E}, \textbf{96}:\ 128102--1--4, 2006.

\bibitem{Monasson1}
V.~Baldazzi, S.~Cocco, E.~Marinari, and R.~Monasson.
\newblock Infering dna sequences from mechanical unzipping data: the
  large-bandwith case.
\newblock {\em Physical Review Letters E}, \textbf{75}:\ 011904--1--33, 2007.

\bibitem{Bockelmann}
U.~Bockelmann, B.~Essevaz-Roulet, and F.~Heslot.
\newblock Molecular stick-slip motion revealed by opening dna with piconewton
  forces.
\newblock {\em Phys. Rev. Let.}, \textbf{79}:\ 4489--4492, 1997.

\bibitem{Bockelmann2}
U.~Bockelmann, B.~Essevaz-Roulet, and F.~Heslot.
\newblock Dna strand separation studied by single molecule force measurements.
\newblock {\em Phys. Rev. E}, \textbf{58}:\ 2386--2394, 1998.

\bibitem{Monasson2}
S.~Cocco and R.~Monasson.
\newblock Reconstructing a random potential from its random walks.
\newblock {\em epl}, \textbf{81}:\ 1--6, 2008.

\bibitem{KesKozSpi}
H.~Kesten, M.V. Kozlov, and F.~Spitzer.
\newblock A limit law for random walk in a random environment.
\newblock {\em Comp. Math.}, \textbf{30}:\ 145--168, 1975.

\bibitem{Sinai}
Ya.~G. Sinai.
\newblock The limit behaviour of a one-dimensional random walk in a random
  medium.
\newblock {\em Theory Probab. Appl.}, \textbf{27}(2):\ 256--268, 1982.

\bibitem{Solomon}
F.~Solomon.
\newblock Random walks in random environment.
\newblock {\em Ann. Probab.}, \textbf{3}(1):\ 1--31, 1975.

\bibitem{Viterbi}
A.~J. Viterbi.
\newblock Error bounds for convolutional codes and an asymptotically optimum
  decoding algorithm.
\newblock {\em IEEE Trans. Inf. Theory}, \textbf{13}(2):260--269, 1967.

\end{thebibliography}

\end{document}